\documentclass[10pt]{article}
\usepackage{}
\usepackage{amssymb}
\usepackage{amsfonts}
\usepackage{bbm}
\usepackage{mathrsfs}
\usepackage{mathrsfs}
\usepackage{longtable}
\usepackage{supertabular}
\textheight=7.8in \topmargin=0.2in \textwidth=5.5in
\oddsidemargin=0.20in
\usepackage{amsfonts,mathrsfs,latexsym,amsmath,amssymb,amsthm}

\newtheorem{proposition}{Proposition}
\newtheorem{lemma}{Lemma}
\newtheorem{corollary}{Corollary}
\newtheorem{definition}{Definition}
\newtheorem{example}{Example}

\begin{document}
\title{\bf{On double cyclic codes over $\mathbb{Z}_4$}}
\author{{\bf  Jian Gao$^1$, Minjia Shi$^2$, Tingting Wu$^{1*}$, Fang-Wei Fu$^1$}\\
 {\footnotesize \emph{ 1. Chern Institute of Mathematics and LPMC, Nankai University}}\\
  {\footnotesize  \emph{Tianjin, 300071, P. R. China}}\\
   {\footnotesize \emph{ 2. School of Mathematical Sciences, Anhui University}}\\
  {\footnotesize  \emph{Hefei, 230601, P. R. China}}\\
  {\footnotesize  \emph{$^*$Corresponding author: wutingting@mail.nankai.edu.cn}}}
\date{}

\maketitle \noindent {\small {\bf Abstract} Let $R=\mathbb{Z}_4$ be the integer ring mod $4$. A double cyclic code of length $(r,s)$ over $R$ is a set that can be partitioned into two parts that any cyclic shift of the coordinates of both parts leaves invariant the code. These codes can be viewed as $R[x]$-submodules of $R[x]/(x^r-1)\times R[x]/(x^s-1)$. In this paper, we determine the generator polynomials of this family of codes as $R[x]$-submodules of $R[x]/(x^r-1)\times R[x]/(x^s-1)$. Further, we also give the minimal generating sets of this family of codes as $R$-submodules of $R[x]/(x^r-1)\times R[x]/(x^s-1)$. Some optimal or suboptimal nonlinear binary codes are obtained from this family of codes. Finally, we determine the relationship of generators between the double cyclic code and its dual. }
 \vskip 1mm

\noindent
 {\small {\bf Keywords} Double cyclic codes; generator polynomials; minimal generating sets; good nonlinear binary codes}

\vskip 3mm \noindent {\bf Mathematics Subject Classification (2000) } 11T71 $\cdot$ 94B05 $\cdot$ 94B15

\vskip 3mm \baselineskip 0.2in
%%%%%%%%%%%%%%%%%%%%%%%%%%%%%%%%%%%%%%%%%%%%%%%%%%%%%%%%%%%%%%%%%%%%%%%%

%%%%%%%%%%%%%%%%%%%%%%%%%%%%%%%%%%%%%%%%%%%%%%%%%%%%%%%%%%%%%%%%%%%%%%%%%%
\section{Introduction}
Error-Correcting codes play important roles in applications ranging from data networking to satellite communication to compact disks. Classical coding theory concerns on linear codes since they have clear structure that makes them easy to discover, to understand and to encode and decode.
\par
Codes over finite rings have been studied since the early 1970s. There are a lot of works on codes over finite rings after the discovery that certain good nonlinear binary codes can be constructed from cyclic codes over $\mathbb{Z}_4$ via the Gray map \cite{Hammons}. Since then, many researchers have payed more and more attentions to study the codes over finite rings. In these studies, the group rings associated with codes are finite chain rings. Recently, the Spanish coding group Borges \emph{et. al}. introduced a class of new codes called $\mathbb{Z}_2\mathbb{Z}_4$-additive codes (see \cite{Borges1}). This family of codes are important from theory to application, and some generalizations are also studied deeply these years (see \cite{Abualrub2}, \cite{Aydogdu1}, \cite{Aydogdu2}). For $\mathbb{Z}_2\mathbb{Z}_4$-additive codes, the set of coordinates is partitioned into two parts, the first one of binary coordinates and the last one of quaternary coordinates. The generator matrices and duality of $\mathbb{Z}_2\mathbb{Z}_4$-additive codes were studied (see \cite{Borges1})£¬ and $\mathbb{Z}_2\mathbb{Z}_4$-additive cyclic codes were first studied by Abualrub \emph{et. al.} (see \cite{Abualrub1}). The methods given in \cite{Abualrub1} and \cite{Borges1} have been used efficiently to study some generalizations of $\mathbb{Z}_2\mathbb{Z}_4$-additive codes \cite{Abualrub2,Aydogdu1,Aydogdu2}.
\par
More recently, Borges \emph{et. al}. studied the algebraic structures of $\mathbb{Z}_2$-double cyclic codes (see \cite{Borges2}). In that literature, the authors determined the generator polynomials of this family of codes and their duals. In fact, the double cyclic codes were generalized quasi-cyclic (GQC) codes with index two introduced by Siap and Kulhan \cite{Siap} and studied deeply by many other researchers \cite{Cao1,Cao2,Esmaeili,Gao} actually. However, the points of view between \cite{Borges2} and \cite{Cao1,Cao2,Esmaeili,Gao,Siap} are different.
 \par
An important motivation to study linear codes, for example cyclic codes and their generalizations, over $\mathbb{Z}_4$ is that some good nonlinear binary codes can be obtained by these codes (see \cite{Aydin} \cite{Hammons}). Naturally, as a special class of linear codes, we ask that if some good nonlinear binary codes can be obtained from the double cyclic codes over $\mathbb{Z}_4$. In this paper, following the approaches given in \cite{Abualrub1} and \cite{Borges2}, we investigate some algebraic structures of double cyclic codes over $\mathbb{Z}_4$.  Some good nonlinear binary codes are obtained from this family of codes indeed.
\par
The paper is organized as follows. In Section 2, we introduce some definitions and give some structural properties of double cyclic codes over $\mathbb{Z}_4$. In Section 3, we determine the minimal generating sets of double cyclic codes over $\mathbb{Z}_4$. Moreover, some good nonlinear binary codes are obtained from this family of codes. In Section 4, we determine the relationship of generators between the double cyclic code and its dual.
\section{Double cyclic codes over $R$}
\label{sec:1}
Let $R=\mathbb{Z}_4$ be the integer ring mod $4$, then $R$ is a finite chain ring with the maximal ideal $(2)$ and the characteristic $4$. Let $R^n$ be the free $R$-module with rank $n$. A subset $\mathscr{C}$ of $R^n$ is called a linear code if and only if $\mathscr{C}$ is an $R$-submodule of $R^n$. Suppose that there are two odd positive integers $r, s$ such that $n=r+s$, similar to Definitions 2.1 and 2.2 given in \cite{Borges2}, we can also give the following definition of double cyclic codes over $R$.

\begin{definition}
Let $\mathscr{C}$ be a linear code of length $n$ over $R$. The code $\mathscr{C}$ is called a double cyclic code of length $(r, s)$ if for any
\begin{equation*}
c=(c_{1,0}, c_{1,1}, \ldots, c_{1, r-1}|c_{2,0}, c_{2,1},\ldots, c_{2,s-1})\in \mathscr{C}
\end{equation*}
implies the cyclic shift
\begin{equation*}
T(c)=(c_{1,r-1}, c_{1,0}, \ldots, c_{1, r-2}|c_{2,s-1}, c_{2,0},\ldots, c_{2,s-2})\in \mathscr{C}.
\end{equation*}
\end{definition}
\par
By the above definition, we can see that the double cyclic code $\mathscr{C}$ can be viewed as an $R$-submodule of $R^r\times R^s$.
\par
For any elements $$c=(c_{1,0}, c_{1,1}, \ldots, c_{1,r-1}|c_{2,0}, c_{2,1}, \ldots, c_{2,s-1})$$ and $$c'=(c'_{1,0}, c'_{1,1}, \ldots, c'_{1,r-1}|c'_{2,0}, c'_{2,1}, \ldots, c'_{2,s-1})$$
of $R^r\times R^s$, the inner product is defined as
 \begin{equation*}
 c\cdot c'=\sum_{i=0}^{r-1}c_{1,i}c'_{1,i}+\sum_{j=0}^{s-1}c_{2,j}c'_{2,j}.
 \end{equation*}
 Define the dual code of the double cyclic code $\mathscr{C}$ as
 \begin{equation*}
 \mathscr{C}^\perp=\{c'\in R^r\times R^s|c\cdot c'=0,~\forall c \in \mathscr{C}\}.
 \end{equation*}
 \begin{proposition}
 If $\mathscr{C}$ is a double cyclic code of length $(r,s)$ over $R$, then the dual code $\mathscr{C}^\perp$ is also a double cyclic code of length $(r, s)$ over $R$.
 \end{proposition}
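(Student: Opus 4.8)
The plan is to exploit the fact that the shift operator $T$ is merely a permutation of the $n=r+s$ coordinates, so it leaves the inner product invariant, together with the observation that $T$ has finite order. First I would record the routine fact that $\mathscr{C}^\perp$ is automatically an $R$-submodule of $R^r\times R^s$, hence a linear code; this is a general property of orthogonal complements with respect to the given bilinear pairing and uses none of the cyclic structure. It then remains only to verify that $\mathscr{C}^\perp$ is invariant under $T$.

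The key algebraic identity I would establish is that $T$ preserves the inner product, that is, $T(u)\cdot T(v)=u\cdot v$ for all $u,v\in R^r\times R^s$. This holds because $T$ simply reorders the summands in the definition of $c\cdot c'$: it cyclically permutes the first $r$ coordinates among themselves and the last $s$ coordinates among themselves, leaving the total sum unchanged. Replacing $u$ by $T^{-1}(u)$ in this equality yields the more useful form
\begin{equation*}
u\cdot T(v)=T^{-1}(u)\cdot v,\qquad \forall\, u,v\in R^r\times R^s.
\end{equation*}

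Next I would use that $T$ has finite order. Since $T^{\,\mathrm{lcm}(r,s)}$ is the identity, the inverse is a positive power, $T^{-1}=T^{\,\mathrm{lcm}(r,s)-1}$; and because $\mathscr{C}$ is closed under $T$, it is closed under every power of $T$, in particular under $T^{-1}$. Now fix $c'\in\mathscr{C}^\perp$ and take any $c\in\mathscr{C}$. The identity above gives
\begin{equation*}
c\cdot T(c')=T^{-1}(c)\cdot c'.
\end{equation*}
Since $T^{-1}(c)\in\mathscr{C}$ and $c'\in\mathscr{C}^\perp$, the right-hand side vanishes. As $c\in\mathscr{C}$ was arbitrary, this shows $T(c')\in\mathscr{C}^\perp$, so $\mathscr{C}^\perp$ is invariant under $T$ and is therefore a double cyclic code of length $(r,s)$.

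I do not expect a genuine obstacle here; the only point requiring care is the reduction of a \emph{forward} shift on the dual to a \emph{backward} shift on the code, i.e. recognizing that one passes from $T(c')$ to $T^{-1}(c)$ via inner-product invariance and then cancels $T^{-1}$ by invoking the finiteness of the order of $T$. Everything else is bookkeeping, and the hypothesis that $r$ and $s$ are odd is not even needed for this particular statement.
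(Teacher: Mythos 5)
Your proof is correct and follows essentially the same route as the paper: both arguments hinge on the identity $c\cdot T(c')=T^{k-1}(c)\cdot c'$ with $k=\mathrm{lcm}(r,s)$, together with the closure of $\mathscr{C}$ under $T^{k-1}=T^{-1}$. The only difference is presentational --- you derive the key identity from the permutation-invariance of the inner product under $T$, whereas the paper verifies the same identity by expanding the inner product coordinate by coordinate for $u=T^{k-1}(c)$.
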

 \begin{proof}
 Let $\mathscr{C}$ be a double cyclic code of length $(r, s)$ over $R$. Suppose that the element $c'=(c'_{1,0}, c'_{1,1}, \ldots, c'_{1,r-1}|c'_{2,0},c'_{2,1},\ldots,c'_{2,s-1})$ be a codeword of $\mathscr{C}^\bot$. It is sufficient to show that $T(c')\in \mathscr{C}^\perp$. Since $c'$ is a codeword of $\mathscr{C}^\perp$, it follows that for any codeword $c=(c_{1,0}, c_{1,1}, \ldots, c_{1,r-1}|c_{2,0}, c_{2,1}, \ldots, c_{2,s-1})\in \mathscr{C}$, we have $c'\cdot c=0$. We just need to show that $T(c')\cdot c=0$. Let $k={\rm lcm}(r,s)$, then $T^{k}(c)=c$ for any codeword $c$ of $\mathscr{C}$. Let $u=T^{k-1}(c)$. Since $\mathscr{C}$ is a double cyclic code, it follows that $T^{k-1}(c)=u \in \mathscr{C}$, then we have
 \begin{equation*}
 \begin{split}
 c'\cdot u&=c'_{1,0}c_{1,1}+\cdots+c'_{1,r-2}c_{1,r-1}+c'_{1,r-1}c_{1,0}+c'_{2,0}c_{2,1}+\cdots+c'_{2,s-2}c_{2,s-1}+c'_{2,s-1}c_{2,0}\\
          &=c_{1,0}c'_{1,r-1}+c_{1,1}c'_{1,0}+\cdots+c_{1,r-1}c'_{1,r-2}+c_{2,0}c'_{2,r-1}+c_{2,1}c'_{2,0}+\cdots+c_{2,r-1}c'_{2,r-2}\\
          &=c\cdot T(c')\\
          &=0,
  \end{split}
  \end{equation*}
which implies that $T(c')\in \mathscr{C}^\perp$. Thus, $\mathscr{C}^\perp$ is also a double cyclic code of length $(r, s)$ over $R$.
 \end{proof}
 \par
 Denote $R_{r, s}$ be the ring $R[x]/(x^r-1)\times R[x]/(x^s-1)$. Define a map $\tau$ from $R^r\times R^s$ to $R_{r, s}$ as $\tau((c_{1,0}, c_{1,1}, \ldots, c_{1, r-1}|c_{2,0}, c_{2,1},\ldots, c_{2,s-1}))=(c_{1,0}+c_{1,1}x+\cdots+c_{1,r-1}x^{r-1}| c_{2,0}+c_{2,1}x+\cdots+c_{2,s-1}x^{s-1})$, then $\tau$ is a bijective $R$-module isomorphism. Consider the following multiplication \begin{equation*}
x*(f(x)|g(x))=(xf(x)|xg(x)),
\end{equation*}
then the multiplication $*$ is well-defined, and $R_{r, s}$ is an $R[x]$-module with respect to this multiplication. Further, for any $c(x)=(c_{1,0}+c_{1,1}x+\cdots+c_{1,r-1}x^{r-1}| c_{2,0}+c_{2,1}x+\cdots+c_{2,s-1}x^{s-1})\in R_{r,s}$, $x*c(x)=(c_{1,r-1}+c_{1,0}x+\cdots+c_{1,r-2}x^{r-1}|c_{2,s-1}+c_{2,0}x+\cdots+c_{2,s-2}x^{s-1})$. Therefore,$x*c(x)$ is the image of the vector $(c_{1,r-1}, c_{1,0}, \ldots, c_{1, r-2}|c_{2,s-1}, c_{2,0},\ldots, c_{2,s-2})$, which implies that multiply the element $c(x)\in R_{r,s}$ by $x$ corresponding to a cyclic shift of the preimage of $c(x)$. Thus, the double cyclic code of length $(r, s)$ over $R$ can be viewed as an $R[x]$-submodule of $R_{r,s}$. In the following, we identity the double cyclic codes of length $(r, s)$ with the $R[x]$-submodules of $R_{r, s}$.
\begin{proposition}
Let $\mathscr{C}$ be a double cyclic code of length $(r, s)$ over $R$, then
\begin{equation*}
\mathscr{C}=((f_1(x)+2g_1(x)|0), (l(x)|f_2(x)+2g_2(x))),
\end{equation*}
where the polynomials $f_1(x), g_1(x), l(x), f_2(x), g_2(x)\in R[x]$ with $g_1(x)|f_1(x)|(x^r-1)$ and $g_2(x)|f_2(x)|(x^s-1)$.
\end{proposition}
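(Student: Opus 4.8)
The plan is to reduce the statement to the known structure theory of cyclic codes over $\mathbb{Z}_4$ of odd length, combined with a projection-and-kernel decomposition of $\mathscr{C}$ inside $R_{r,s}$. Since $r$ and $s$ are odd, the polynomials $x^r-1$ and $x^s-1$ factor (via Hensel lifting) into pairwise coprime basic irreducibles over $R=\mathbb{Z}_4$, so $R[x]/(x^r-1)$ and $R[x]/(x^s-1)$ are principal ideal rings and every cyclic code over $R$ of length $r$ (respectively $s$) is a principal ideal of the form $\langle f(x)+2g(x)\rangle$ with $g(x)\mid f(x)\mid(x^r-1)$ (respectively $(x^s-1)$). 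I would invoke this as the standard structure theorem for $\mathbb{Z}_4$-cyclic codes; it is precisely where the oddness of $r,s$ is used.

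Next I would consider the projection onto the second coordinate, $\pi:R_{r,s}\to R[x]/(x^s-1)$ given by $\pi((a(x)\mid b(x)))=b(x)$. Because multiplication by $x$ acts coordinatewise, $\pi$ is an $R[x]$-module homomorphism, so $\pi(\mathscr{C})$ is an $R[x]$-submodule of $R[x]/(x^s-1)$, i.e.\ a cyclic code of length $s$; by the structure theorem $\pi(\mathscr{C})=\langle f_2(x)+2g_2(x)\rangle$ with $g_2\mid f_2\mid(x^s-1)$. Dually, the restriction $\ker(\pi|_{\mathscr{C}})=\{(a(x)\mid 0)\in\mathscr{C}\}$ projects, under the first coordinate, onto an $R[x]$-submodule of $R[x]/(x^r-1)$, hence a cyclic code of length $r$, so it has the form $\langle f_1(x)+2g_1(x)\rangle$ with $g_1\mid f_1\mid(x^r-1)$. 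Thus $(f_1+2g_1\mid 0)\in\mathscr{C}$ generates the kernel as an $R[x]$-module, and since $f_2+2g_2\in\pi(\mathscr{C})$ I may choose any preimage $(l(x)\mid f_2+2g_2)\in\mathscr{C}$.

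The final step is the reduction argument showing these two elements generate. Given an arbitrary $c=(a\mid b)\in\mathscr{C}$, since $b\in\pi(\mathscr{C})$ I can write $b=\lambda(x)(f_2+2g_2)$, and then $c-\lambda*(l\mid f_2+2g_2)=(a-\lambda l\mid 0)$ lies in $\ker(\pi|_{\mathscr{C}})$, so $a-\lambda l=\mu(x)(f_1+2g_1)$ for some $\mu(x)\in R[x]$, giving $c=\mu*(f_1+2g_1\mid 0)+\lambda*(l\mid f_2+2g_2)$. This establishes $\mathscr{C}=((f_1+2g_1\mid 0),(l\mid f_2+2g_2))$ with the stated divisibility conditions, which are inherited directly from the structure theorem applied to each coordinate.

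The main obstacle is not the decomposition itself, which is a clean first-isomorphism-type argument entirely parallel to the $\mathbb{Z}_2$ case of \cite{Borges2}; rather, the essential technical input is the correct invocation of the single-generator structure theorem for cyclic codes over $\mathbb{Z}_4$, and the verification that the chosen pair truly generates $\mathscr{C}$ (the reduction above), since this is what forces the ``mixed'' generator $(l\mid f_2+2g_2)$ to appear alongside the purely first-coordinate generator $(f_1+2g_1\mid 0)$.
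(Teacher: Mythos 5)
Your proposal is correct and takes essentially the same route as the paper: the paper likewise projects onto the second coordinate via an $R[x]$-module homomorphism $\phi((c_1(x)|c_2(x)))=c_2(x)$, identifies $\phi(\mathscr{C})=(f_2(x)+2g_2(x))$ and the first-coordinate ideal of $\ker(\phi)$ as $(f_1(x)+2g_1(x))$ using the structure theorem for cyclic codes over $\mathbb{Z}_4$ of odd length, and then takes a preimage $(l(x)|f_2(x)+2g_2(x))$. Your explicit reduction showing the two elements generate is in fact slightly more detailed than the paper's terse appeal to the first isomorphism theorem, but it is the same argument.
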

\begin{proof}
Since the double cyclic code $\mathscr{C}$ and $R[x]/(x^s-1)$ are $R[x]$-submodules of $R_{r,s}$, we can define a map $\phi:~\mathscr{C}\rightarrow R[x]/(x^s-1)$ with $\phi((c_1(x)|c_2(x)))=c_2(x)$. Clearly, $\phi$ is an $R[x]$-module homomorphism and its image is an ideal of $R[x]/(x^s-1)$. Since $s$ is an odd positive integer, it follows that, by the theory of cyclic codes over $R$ (see \cite{Wan}), $\phi(\mathscr{C})=(f_2(x)+2g_2(x))$ with $f_2(x), g_2(x)\in R[x]$ and $g_2(x)|f_2(x)|(x^s-1)$. Note that
\begin{equation*}
Ker(\phi)=\{(c_1(x)|0)\in \mathscr{C}|~c_1(x)\in R[x]/(x^r-1)\}.
\end{equation*}
Define a set
\begin{equation*}
\mathcal {I}=\{c_1(x)\in R[x]/(x^r-1)|~(c_1(x)|0)\in Ker(\mathscr{C})\}.
\end{equation*}
Clearly, $\mathcal {I}$ is an ideal of $R[x]/(x^r-1)$. Therefore, there exist polynomials $f_1(x), g_1(x)\in R[x]/(x^r-1)$ with $g_1(x)|f_1(x)|(x^r-1)$ such that $\mathcal {I}=(f_1(x)+2g_1(x))$. Now, for any element $(c_1(x)|0)\in Ker(\phi)$, we have $c_1(x)\in \mathcal {I}$ and there exists some polynomial $m(x)\in R[x]$ such that $c_1(x)=m(x)(f_1(x)+2g_1(x))$. Thus
\begin{equation*}
(c_1(x)|0)=m(x)*(f_1(x)+2g_1(x)|0),
\end{equation*}
which implies that $Ker(\phi)$ is an $R[x]$-submodule of $\mathscr{C}$ generated by $(f_1(x)+2g_1(x)|0)$. Thus, by the first isomorphism theorem, we have
\begin{equation*}
\mathscr{C}/Ker(\phi)\cong (f_2(x)+2g_2(x)).
\end{equation*}
Let $(l(x)|f_2(x)+2g_2(x))\in \mathscr{C}$ with $\phi((l(x)|f_2(x)+2g_2(x)))=f_2(x)+2g_2(x)$, then any double cyclic code of length $(r, s)$ over $R$ can be generated as an $R[x]$-submodule of $R_{r, s}$ by two elements of the form $(f_1(x)+2g_1(x)|0)$ and $(l(x)|f_2(x)+2g_2(x))$ with $f_1(x), g_1(x), l(x), f_2(x), g_2(x)\in R[x]$ and $g_1(x)|f_1(x)|(x^r-1)$, $g_2(x)|f_2(x)|(x^s-1)$.
\end{proof}
\begin{lemma}
If $\mathscr{C}=((f_1(x)+2g_1(x)|0), (l(x)|f_2(x)+2g_2(x)))$ is a double cyclic code of length $(r, s)$ over $R$, then we may assume that ${\rm deg}(l(x))< {\rm deg}(f_1(x)+2g_1(x))$.
\end{lemma}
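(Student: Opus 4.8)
The plan is to reduce the first component $l(x)$ of the second generator modulo the kernel generator $f_1(x)+2g_1(x)$ by a division-with-remainder argument, and then to check that replacing $l(x)$ by the remainder produces a different but equivalent generating pair for the same module $\mathscr{C}$. The whole argument rests on the fact that one is allowed to divide by $f_1(x)+2g_1(x)$ in $R[x]$, so I would establish that first.

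First I would observe that the leading coefficient of $f_1(x)+2g_1(x)$ is a unit of $R$. By the structure theory of cyclic codes over $R$ for odd length (see \cite{Wan}), since $g_1(x)\mid f_1(x)\mid (x^r-1)$ we may take $f_1(x)$ and $g_1(x)$ to be monic. If ${\rm deg}(g_1)<{\rm deg}(f_1)$, then $f_1+2g_1$ has leading coefficient $1$; if ${\rm deg}(g_1)={\rm deg}(f_1)$, then the divisibility forces $g_1=f_1$, so $f_1+2g_1=3f_1$ has leading coefficient $3$. In either case the leading coefficient is a unit ($1$ or $3$) in $\mathbb{Z}_4$, which is exactly the condition under which the Euclidean division algorithm is valid over the ring $R[x]$.

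Next I would divide in $R[x]$, writing $l(x)=q(x)\big(f_1(x)+2g_1(x)\big)+\tilde{l}(x)$ with ${\rm deg}(\tilde{l})<{\rm deg}(f_1+2g_1)$, and then consider the element
\begin{equation*}
(\tilde{l}(x)\,|\,f_2(x)+2g_2(x))=(l(x)\,|\,f_2(x)+2g_2(x))-q(x)*(f_1(x)+2g_1(x)\,|\,0),
\end{equation*}
which lies in $\mathscr{C}$ because $\mathscr{C}$ is an $R[x]$-submodule containing both generators. I would then argue that the pairs $\{(f_1+2g_1|0),(l|f_2+2g_2)\}$ and $\{(f_1+2g_1|0),(\tilde{l}|f_2+2g_2)\}$ generate the same code: each generator of one set is an $R[x]$-combination of the other, the reverse passage being $(l|f_2+2g_2)=(\tilde{l}|f_2+2g_2)+q(x)*(f_1+2g_1|0)$. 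Hence we may replace $l(x)$ by $\tilde{l}(x)$ and assume ${\rm deg}(l)<{\rm deg}(f_1+2g_1)$.

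I expect the main obstacle to be the unit-leading-coefficient claim rather than the reduction itself, since over $R=\mathbb{Z}_4$ a divisor of a monic polynomial need not be monic in general (the presence of the zero divisor $2$ breaks the naive degree argument). This is precisely where one must invoke the Hensel/monic factorization of $x^r-1$ for odd $r$ from the theory of cyclic codes over $\mathbb{Z}_4$; once the leading coefficient is known to be a unit, the division and the recombination of generators are routine.
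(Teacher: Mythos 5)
Your proof is correct and takes essentially the same route as the paper: both arguments reduce $l(x)$ modulo $f_1(x)+2g_1(x)$ by subtracting an $R[x]$-multiple of the generator $(f_1(x)+2g_1(x)|0)$ from the second generator, and then verify that the old and new generating pairs span the same $R[x]$-submodule. If anything, your version is the more complete one, since the paper performs only a one-step modification (adding $x^i(f_1(x)+2g_1(x))$ without tracking leading coefficients), whereas you explicitly justify the division algorithm by checking that the leading coefficient of $f_1(x)+2g_1(x)$ is a unit of $\mathbb{Z}_4$ --- a detail the paper leaves implicit.
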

\begin{proof}
Suppose that ${\rm deg}(l(x))\geq {\rm deg}(f_1(x)+2g_1(x))$ with ${\rm deg}(l(x))- {\rm deg}(f_1(x)+2g_1(x))=i$. Consider another double cyclic code of length $(r, s)$ with generators of
\begin{equation*}
\mathcal {C}=((f_1(x)+2g_1(x)|0), (l(x)+x^i(f_1(x)+2g_1(x))|f_2(x)+2g_2(x))).
\end{equation*}
Clearly, $\mathcal {C}\subseteq \mathscr{C}$. However, we also have that $(l(x)|f_2(x)+2g_2(x))=(l(x)+x^i(f_1(x)+2g_1(x))|f_2(x)+2g_2(x))-x^i*(f_1(x)+2g_1(x)|0)$, which implies that $(l(x)|f_2(x)+2g_2(x))\in \mathscr{C}$. Therefore, $\mathscr{C}\subseteq \mathcal {C}$ implying $\mathscr{C}=\mathcal {C}$.
\end{proof}
\begin{lemma}
If $\mathscr{C}=((f_1(x)+2g_1(x)|0), (l(x)|f_2(x)+2g_2(x)))$ is a double cyclic code of length $(r, s)$ over $R$, then we may assume that $(f_1(x)+2g_1(x))|\frac{x^s-1}{g_2(x)}l(x)$.
\end{lemma}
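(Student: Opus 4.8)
The plan is to produce, from the second generator $(l(x)\mid f_2(x)+2g_2(x))$, a codeword whose second coordinate is zero, and then read off the divisibility from the description of $Ker(\phi)$ obtained in Proposition 2. The multiplier that accomplishes this is $\frac{x^s-1}{g_2(x)}$, which is a genuine element of $R[x]$ precisely because $g_2(x)\mid f_2(x)\mid(x^s-1)$.

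First I would verify that $\frac{x^s-1}{g_2(x)}$ annihilates $f_2(x)+2g_2(x)$ in $R[x]/(x^s-1)$. Writing $f_2(x)=g_2(x)q(x)$, which is legitimate since $g_2(x)\mid f_2(x)$, one computes
\[
\frac{x^s-1}{g_2(x)}\bigl(f_2(x)+2g_2(x)\bigr)=(x^s-1)q(x)+2(x^s-1)\equiv 0\pmod{x^s-1}.
\]
The point worth isolating is why one divides by $g_2$ rather than by $f_2$: the coefficient $2$ on the second summand forces it, since $\frac{x^s-1}{g_2}\cdot 2g_2=2(x^s-1)$ vanishes modulo $x^s-1$, whereas the analogous product $\frac{x^s-1}{f_2}\cdot 2g_2$ has degree below $s$ and in general does not vanish.

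With the annihilation in hand, multiplying the second generator in the $R[x]$-module $R_{r,s}$ yields
\[
\frac{x^s-1}{g_2(x)}*\bigl(l(x)\mid f_2(x)+2g_2(x)\bigr)=\Bigl(\tfrac{x^s-1}{g_2(x)}\,l(x)\;\Big|\;0\Bigr)\in\mathscr{C}.
\]
This codeword has zero second coordinate, so it lies in $Ker(\phi)$. By Proposition 2, $Ker(\phi)$ is the $R[x]$-submodule generated by $(f_1(x)+2g_1(x)\mid 0)$, equivalently $\tfrac{x^s-1}{g_2(x)}l(x)$ belongs to the ideal $\mathcal{I}=(f_1(x)+2g_1(x))$ of $R[x]/(x^r-1)$. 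Hence $\tfrac{x^s-1}{g_2(x)}l(x)$ is a multiple of $f_1(x)+2g_1(x)$ modulo $x^r-1$, which is the claimed divisibility; since this holds for every generating pair of the form supplied by Proposition 2, we may indeed assume it.

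I expect the annihilator computation of the second step to be the only delicate point, mainly the justification that division by $g_2$ is the right and forced choice; everything afterward is a direct application of Proposition 2. I would also flag that the divisibility must be read inside $R[x]/(x^r-1)$ as membership in $\mathcal{I}$, because over $\mathbb{Z}_4$ the polynomial $f_1(x)+2g_1(x)$ generally does not divide $x^r-1$ in $R[x]$ (for instance, modulo $x^3-1$ one has $x^2+x+1+2=x^2+x+3\nmid x^3-1$), so phrasing the conclusion as reduction of $\tfrac{x^s-1}{g_2(x)}l(x)$ modulo $x^r-1$ followed by membership in $\mathcal{I}$ is the safe route.
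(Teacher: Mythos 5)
Your proposal is correct and follows exactly the paper's argument: multiply the second generator by $\frac{x^s-1}{g_2(x)}$ to kill the second coordinate, observe the result lies in $Ker(\phi)$, and invoke the description of $Ker(\phi)$ as the submodule generated by $(f_1(x)+2g_1(x)|0)$ from Proposition 2. Your explicit verification of the annihilation $\frac{x^s-1}{g_2(x)}\bigl(f_2(x)+2g_2(x)\bigr)\equiv 0 \pmod{x^s-1}$ and the clarification that the divisibility means ideal membership in $R[x]/(x^r-1)$ are details the paper leaves implicit, but the route is the same.
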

\begin{proof}
Since $\frac{x^s-1}{g_2(x)}*(l(x)|f_2(x)+2g_2(x))=(\frac{x^s-1}{g_2(x)}l(x)|0)$, it follows that $\phi(\frac{x^s-1}{g_2(x)}*(l(x)|f_2(x)+2g_2(x)))=0$. Therefore, $(\frac{x^s-1}{g_2(x)}l(x)|0) \in Ker(\phi) \subseteq \mathscr{C}$ and $(f_1(x)+2g_1(x))|\frac{x^s-1}{g_2(x)}l(x)$.
\end{proof}
\par
Lemma 2 shows that if the double cyclic code $\mathscr{C}$ has only one generator of the form $(l(x)|f_2(x)+2g_2(x))$, then we have $(x^r-1)|\frac{x^s-1}{g_2(x)}l(x)$ and $g_2(x)|f_2(x)|(x^s-1)$. Thus, from this discussion and Lemmas 1 and 2, we have the following results directly.
\begin{proposition}
Let $\mathscr{C}$ be a double cyclic code of length $(r, s)$ over $R$, then we can classify $\mathscr{C}$ as follows:\\
{\rm (i)}~$\mathscr{C}=(f_1(x)+2g_1(x)|0)$ with $g_1(x)|f_1(x)|(x^r-1)$;\\
{\rm (ii)}~$\mathscr{C}=(l(x)|f_2(x)+2g_2(x))$ with $(x^r-1)|\frac{x^s-1}{g_2(x)}l(x)$ and $g_2(x)|f_2(x)|(x^s-1)$;\\
{\rm (iii)}~$\mathscr{C}=((f_1(x)+2g_1(x)|0), (l(x)|f_2(x)+2g_2(x)))$ with $g_1(x)|f_1(x)|(x^r-1)$, ${\rm deg}(l(x))< {\rm deg}(f_1(x)+2g_1(x))$, $(f_1(x)+2g_1(x))|\frac{x^s-1}{g_2(x)}l(x)$ and $g_2(x)|f_2(x)|(x^s-1)$.
\end{proposition}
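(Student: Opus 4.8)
The plan is to treat this proposition not as a new result but as a bookkeeping step that assembles Proposition 2, Lemma 1, and Lemma 2 into a case analysis governed by the projection $\phi$ used in the proof of Proposition 2. Recall that $\phi:\mathscr{C}\rightarrow R[x]/(x^s-1)$ sends $(c_1(x)|c_2(x))$ to $c_2(x)$, that $\phi(\mathscr{C})=(f_2(x)+2g_2(x))$ with $g_2(x)|f_2(x)|(x^s-1)$, and that $Ker(\phi)$ is generated as an $R[x]$-submodule by $(f_1(x)+2g_1(x)|0)$ with $g_1(x)|f_1(x)|(x^r-1)$. The three listed cases correspond exactly to whether each of $\phi(\mathscr{C})$ and $Ker(\phi)$ is trivial, and I would organize the argument along these two dichotomies.

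First I would dispose of the case $\phi(\mathscr{C})=0$. Here every codeword has zero second component, so $\mathscr{C}=Ker(\phi)=(f_1(x)+2g_1(x)|0)$, which is precisely case {\rm (i)}; the divisibility $g_1(x)|f_1(x)|(x^r-1)$ is already supplied by Proposition 2. Next I would treat the case $Ker(\phi)=0$. By the first isomorphism theorem $\phi$ restricts to an isomorphism $\mathscr{C}\cong\phi(\mathscr{C})$, so $\mathscr{C}$ is generated by the single preimage element $(l(x)|f_2(x)+2g_2(x))$, giving case {\rm (ii)}. The condition on $l(x)$ then comes from specializing Lemma 2: since $Ker(\phi)=0$ forces $f_1(x)+2g_1(x)$ to be an associate of $x^r-1$ modulo $x^r-1$, the divisibility $(f_1(x)+2g_1(x))\mid\frac{x^s-1}{g_2(x)}l(x)$ reduces to $(x^r-1)\mid\frac{x^s-1}{g_2(x)}l(x)$, exactly as noted in the discussion preceding the statement.

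Finally, when both $\phi(\mathscr{C})$ and $Ker(\phi)$ are nontrivial, both generators are genuinely needed, and I would simply record the combined set of conditions: Proposition 2 gives $g_1(x)|f_1(x)|(x^r-1)$ and $g_2(x)|f_2(x)|(x^s-1)$, Lemma 1 permits the normalization ${\rm deg}(l(x))<{\rm deg}(f_1(x)+2g_1(x))$, and Lemma 2 supplies $(f_1(x)+2g_1(x))\mid\frac{x^s-1}{g_2(x)}l(x)$. This is case {\rm (iii)}, and no new computation is required since every ingredient has already been proved.

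Since the whole argument is an assembly of prior results, there is no substantial obstacle; the only point demanding care is the bookkeeping in the degenerate cases. Specifically, one must verify that a trivial image or a trivial kernel forces the corresponding generator polynomial to be an associate of $x^r-1$ (respectively $x^s-1$) in $R[x]/(x^r-1)$, so that the single-generator conditions appearing in {\rm (i)} and {\rm (ii)} are correctly recognized as the specializations of the general conditions in {\rm (iii)} rather than as separate hypotheses. Making that identification precise over the chain ring $R=\mathbb{Z}_4$, where the divisibility conventions for $g_i(x)|f_i(x)|(x^s-1)$ follow the theory of cyclic codes over $R$, is the one place where I would slow down and be explicit.
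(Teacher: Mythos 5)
Your proposal is correct and takes essentially the same route as the paper: the paper gives no separate proof, stating only that the classification follows ``directly'' from Proposition 2, Lemmas 1 and 2, and the preceding discussion of the one-generator case, which is precisely the trichotomy on the triviality of $\phi(\mathscr{C})$ and $Ker(\phi)$ that you spell out. Your explicit handling of the degenerate cases---identifying the zero kernel ideal with the generator $x^r-1$ so that the divisibility of Lemma 2 specializes to $(x^r-1)\mid\frac{x^s-1}{g_2(x)}l(x)$---merely makes precise what the paper leaves implicit.
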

\section{Minimal generating sets}
Let $\mathscr{C}$ be a double cyclic code of length $(r, s)$ over $R$. $\mathscr{C}$ is also an $R$-module. In this section, we will determine the generating sets of $\mathscr{C}$ in $R_{r, s}$ as an $R$-module. These sets will be used to determine the size and the generator matrix of $\mathscr{C}$.
\begin{proposition}
Let $\mathscr{C}=((f_1(x)+2g_1(x)|0), (l(x)|f_2(x)+2g_2(x)))$ be a double cyclic code of length $(r, s)$ over $R$ with ${\rm deg}(f_1(x))=t_1$, ${\rm deg}(g_1(x))=t_2$, $h_1(x)=\frac{x^r-1}{f_1(x)}$, ${\rm deg}(f_2(x))=r_1$, ${\rm deg}(g_2(x))=r_2$ and $h_2(x)=\frac{x^s-1}{f_2(x)}$. Let
\begin{equation*}
S_1=\bigcup_{i=0}^{r-t_1-1}\{ x^i*(f_1(x)+2g_1(x)|0)\},
\end{equation*}
\begin{equation*}
S_2=\bigcup_{i=0}^{t_1-t_2-1}\{ x^i*(2h_1(x)g_1(x)|0)\},
\end{equation*}
\begin{equation*}
S_3=\bigcup_{i=0}^{s-r_1-1}\{ x^i*(l(x)|f_2(x)+2g_2(x))\},
\end{equation*}
\begin{equation*}
S_4=\bigcup_{i=0}^{r_1-r_2-1}\{ x^i*(h_2(x)l(x)|2h_2(x)g_2(x))\},
\end{equation*}
then $S_1\cup S_2 \cup S_3 \cup S_4$ forms a minimal generating set for $\mathscr{C}$ as an $R$-submodule of $R_{r,s}$. Moreover, $\mathscr{C}$ has $4^{r+s-t_1-r_1}2^{t_1+r_1-t_2-r_2}$ codewords.
\end{proposition}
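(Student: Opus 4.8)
The plan is to exploit the short exact sequence of $R[x]$-modules furnished by the homomorphism $\phi\colon \mathscr{C}\to R[x]/(x^s-1)$, $\phi((c_1(x)|c_2(x)))=c_2(x)$, already used in the proof of Proposition 2. Writing $K=\mathrm{Ker}(\phi)$, we have $\phi(\mathscr{C})=(f_2(x)+2g_2(x))$ and $K\cong \mathcal{I}=(f_1(x)+2g_1(x))$, so that $\mathscr{C}$ fits into $0\to K\to \mathscr{C}\xrightarrow{\phi}\phi(\mathscr{C})\to 0$. Both $\mathcal{I}$ (a cyclic code of length $r$) and $\phi(\mathscr{C})$ (a cyclic code of length $s$) are cyclic codes over $R=\mathbb{Z}_4$, so the structure theory of \cite{Wan} applies to each.

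First I would recall the $\mathbb{Z}_4$-cyclic structure theorem: for a cyclic code $(f+2g)$ of length $n$ with $g\mid f\mid(x^n-1)$, $\deg f=\alpha$, $\deg g=\beta$ and $h=(x^n-1)/f$, the set $\{x^i(f+2g):0\le i\le n-\alpha-1\}\cup\{x^i\cdot 2hg:0\le i\le \alpha-\beta-1\}$ is a minimal $R$-generating set, the first block consisting of free generators (coefficients in $\mathbb{Z}_4$) and the second of $2$-torsion generators (coefficients in $\{0,1\}$), and the code has $4^{n-\alpha}2^{\alpha-\beta}$ elements. Applying this to $\mathcal{I}$ and transporting along $c_1(x)\mapsto (c_1(x)|0)$ shows that $S_1\cup S_2$ is precisely the embedded minimal generating set of $K$, with $|K|=4^{r-t_1}2^{t_1-t_2}$. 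Applying it to $\phi(\mathscr{C})$ gives the minimal generating set $\{x^i(f_2+2g_2)\}\cup\{x^i\cdot 2h_2g_2\}$ of size $4^{s-r_1}2^{r_1-r_2}$. The key computation is that $h_2(x)f_2(x)\equiv 0\pmod{x^s-1}$, whence $h_2*(l|f_2+2g_2)=(h_2 l|2h_2 g_2)$ lies in $\mathscr{C}$ and $\phi$ sends $S_3$ onto $\{x^i(f_2+2g_2)\}$ and $S_4$ onto $\{x^i\cdot 2h_2g_2\}$; thus $\phi(S_3\cup S_4)$ is a minimal generating set of $\phi(\mathscr{C})$.

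Generation of $\mathscr{C}$ then follows from the standard fact that, for a short exact sequence, a generating set of the kernel together with any preimages of a generating set of the image generates the middle module: $S_1\cup S_2$ generates $K$ and $S_3\cup S_4$ projects onto a generating set of $\phi(\mathscr{C})$, so $S_1\cup S_2\cup S_3\cup S_4$ generates $\mathscr{C}$. The cardinality is immediate from multiplicativity of sizes across the sequence, $|\mathscr{C}|=|K|\cdot|\phi(\mathscr{C})|=4^{r-t_1}2^{t_1-t_2}\cdot 4^{s-r_1}2^{r_1-r_2}=4^{r+s-t_1-r_1}2^{t_1+r_1-t_2-r_2}$.

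For minimality I would establish a unique-representation statement by counting. The elements $\sum_{i}c_i\,x^i*(l|f_2+2g_2)+\sum_i d_i\,x^i*(h_2l|2h_2g_2)$ with $c_i\in\mathbb{Z}_4$ and $d_i\in\{0,1\}$ map under $\phi$ bijectively onto the $4^{s-r_1}2^{r_1-r_2}$ elements of $\phi(\mathscr{C})$ (by uniqueness of representation in its minimal generating set), hence form a complete, irredundant set of coset representatives for $K$ in $\mathscr{C}$; adjoining the $|K|$ distinct $R$-combinations of $S_1\cup S_2$ yields each of the $|\mathscr{C}|$ codewords exactly once. This bijection between formal combinations and codewords shows no generator can be dropped, so the set is minimal. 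The one genuine subtlety, which I would flag explicitly, is that the generators in $S_4$ are \emph{not} $2$-torsion in $\mathscr{C}$ (their first coordinate $h_2 l$ need not satisfy $2h_2 l\equiv 0 \pmod{x^r-1}$, and Lemma 2 does not force this); the torsion behaviour that legitimizes the coefficient range $\{0,1\}$ is visible only after projecting by $\phi$. This is why the counting must be organized through cosets modulo $K$ rather than by naively assigning coefficient rings to the four blocks inside $\mathscr{C}$ directly, and it is the main point requiring care.
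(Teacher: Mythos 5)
Your proof is correct in substance but takes a genuinely different route from the paper. The paper's proof is a bare-hands computation inside $R_{r,s}$: it writes an arbitrary codeword as $p(x)*(f_1(x)+2g_1(x)|0)+q(x)*(l(x)|f_2(x)+2g_2(x))$, reduces $p(x)$ by the division algorithm modulo $h_1(x)$ and then modulo $\frac{x^r-1}{h_1(x)g_1(x)}$, reduces $q(x)$ modulo $h_2(x)$ and then modulo $\frac{x^s-1}{h_2(x)g_2(x)}$, and — crucially — invokes Lemma 2's divisibility $(f_1(x)+2g_1(x))\mid\frac{x^s-1}{g_2(x)}l(x)$ to place the stray first-component term $\left(\frac{x^s-1}{g_2(x)}q_2(x)l(x)\,\middle|\,0\right)$ in ${\rm Span}(S_1\cup S_2)$. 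You instead organize everything through the exact sequence $0\to K\to\mathscr{C}\xrightarrow{\phi}\phi(\mathscr{C})\to 0$ from Proposition 2 and outsource both divisions to the one-block structure theorem for cyclic codes over $\mathbb{Z}_4$, applied once to $K\cong(f_1(x)+2g_1(x))$ and once to $\phi(\mathscr{C})=(f_2(x)+2g_2(x))$. This buys two things: Lemma 2 disappears as an explicit ingredient (its content is absorbed by exactness — the leftover after subtracting a preimage combination automatically lies in $K$, which the structure theorem says is spanned by $S_1\cup S_2$), and the cardinality $|\mathscr{C}|=|K|\cdot|\phi(\mathscr{C})|$ follows from the first isomorphism theorem rather than the paper's informal "each set contributes" count. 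On minimality, the paper merely asserts it is "obvious," whereas you give a unique-representation bijection and, to your credit, explicitly flag the real subtlety: the $S_4$ generators are not $2$-torsion in $\mathscr{C}$ (only their $\phi$-images are), which is exactly why the paper's Corollary 3(iii) can take $v(x)\in\mathbb{F}_2[x]$ and why the counting must be run through cosets of $K$. One caveat: your final inference, from the bijection of restricted-coefficient combinations to "no generator can be dropped," is not fully airtight — precisely because an $S_4$ generator $v$ satisfies $2v=(2h_2(x)l(x)x^i|0)\in K$, one cannot exclude on counting grounds alone that dropping, say, an $S_2$ generator is compensated by coefficient $2$ on some $S_4$ generator; ruling this out needs an extra argument (e.g., comparing $2$-torsion subcodes or leading-term degrees). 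Since the paper offers no argument at all for this step, your treatment matches and indeed exceeds its rigor, but if you want a complete proof you should close that one loophole.
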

\begin{proof}
Let $c(x)$ be a codeword of $\mathscr{C}$, then there are polynomials $p(x)$ and $q(x)$ in $R[x]$ such that
\begin{equation*}
c(x)=p(x)*(f_1(x)+2g_1(x)|0)+q(x)*(l(x)|f_2(x)+2g_2(x)).
\end{equation*}
If ${\rm deg}(p(x))\leq r-t_1-1$, then $p(x)*(f_1(x)+2g_1(x)|0)\in {\rm Span}(S_1\cup S_2)$. Otherwise, by the division algorithm, there are two polynomials $p_1(x)$ and $t_1(x)$ in $R[x]$ such that
\begin{equation*}
p(x)=p_1(x)h_1(x)+t_1(x),
\end{equation*}
where $t_1(x)=0$ or ${\rm deg}(t_1(x))\leq r-t_1-1$. Therefore, we have that
\begin{equation*}
\begin{split}
p(x)*(f_1(x)+2g_1(x)|0)&= (h_1(x)p_1(x)+t_1(x))*(f_1(x)+2g_1(x)|0)\\
                       &= (2h_1(x)p_1(x)g_1(x)|0)+(t_1(x)(f_1(x)+2g_1(x))|0).
 \end{split}
 \end{equation*}
If ${\rm deg}(p_1(x))\leq t_1-t_2-1$, then we are done. Otherwise, there are polynomials $p_2(x)$ and $t_2(x)$ in $R[x]$ such that
\begin{equation*}
p_1(x)=\frac{x^r-1}{h_1(x)g_1(x)}p_2(x)+r_2(x),
\end{equation*}
where $r_2(x)=0$ or ${\rm deg}(r_2(x))\leq t_1-t_2-1$, then we have
\begin{equation*}
\begin{split}
p_1(x)*(2h_1(x)g_1(x)|0)&= (\frac{x^r-1}{h_1(x)g_1(x)}p_2(x)+r_2(x))*(2h_1(x)g_1(x)|0)\\
                        &= (2t_2(x)h_1(x)g_1(x)|0) \in {\rm Span}(S_2).
 \end{split}
 \end{equation*}
Thus, we have that $p(x)*(f_1(x)+2g_1(x)|0)\in {\rm Span}(S_1\cup S_2)$.
\par
If ${\rm deg}(q(x))\leq s-r_1-1$, then $q(x)*(l(x)|f_2(x)+2g_2(x)) \in {\rm Span}(S_1\cup S_2 \cup S_3 \cup S_4)$. Otherwise, there are two polynomials $q_1(x)$ and $r_1(x)$ in $R[x]$ such that
\begin{equation*}
q(x)=h_2(x)q_1(x)+r_1(x),
\end{equation*}
where $r_1(x)=0$ or ${\rm deg}(r_1(x))\leq s-r_1-1$, then we have
\begin{equation*}
\begin{split}
q(x)*(l(x)|f_2(x)+2g_2(x))&=(h_2(x)q_1(x)+r_1(x))*(l(x)|f_2(x)+2g_2(x))\\
                          &=q_1(x)*(h_2(x)l(x)|2h_2(x)g_2(x))+r_1(x)*(l(x)|f_2(x)+2g_1(x)).
 \end{split}
 \end{equation*}
Note that $r_1(x)*(l(x)|f_2(x)+2g_1(x))\in {\rm Span}(S_3)$. In the following, we will prove that $q_1(x)*(h_2(x)l(x)|2h_2(x)g_2(x))\in {\rm Span}(S_1\cup S_2 \cup S_4)$. From Lemma 2, we have that $(f_1(x)+2g_1(x))|\frac{x^s-1}{g_2(x)}l(x)$, then there exists a polynomial $k(x)$ in $R[x]$ such that $\frac{x^s-1}{g_2(x)}l(x)=(f_1(x)+2g_1(x))k(x)$. If ${\rm deg}(q_1(x))\leq r_1-r_2-1$, then we are done. Otherwise, there are polynomials $q_2(x)$ and $t_2(x)$ in $R[x]$ such that
\begin{equation*}
q_1(x)=\frac{x^s-1}{h_2(x)g_2(x)}q_2(x)+r_2(x),
\end{equation*}
where $r_2(x)=0$ or ${\rm deg}(r_2(x))\leq t_1-t_2-1$, then we have
\begin{equation*}
q_1(x)*(h_2(x)l(x)|2h_2(x)g_2(x))=(\frac{x^s-1}{g_2(x)}q_2(x)l(x)|0)+r_2(x)*(h_2(x)l(x)|2h_2(x)g_2(x)).
\end{equation*}
Clearly, $r_2(x)*(h_2(x)l(x)|2h_2(x)g_2(x)) \in {\rm Span}(S_4)$. Moreover, since $\frac{x^s-1}{g_2(x)}=(f_1(x)+2g_1(x))k(x)$, it follows that $(\frac{x^s-1}{g_2(x)}q_2(x)l(x)|0) \in {\rm Span}(S_1\cup S_2)$. Therefore, we have completed the proof that the set $S_1\cup S_2 \cup S_3 \cup S_4$ is a generating set of $\mathscr{C}$. It is also obvious that the set $S_1\cup S_2 \cup S_3 \cup S_4$ is minimal in the sense that no element in $S_1\cup S_2 \cup S_3 \cup S_4$ is a linear combination of the other elements. Note that the set $S_1$ and $S_2$ will contribute $4^{r-t_1}$ codewords and $2^{t_1-t_2}$ codewords, respectively, while the set $S_3$ and  $S_4$ will contribute $4^{s-r_1}$ codewords and $2^{r_1-r_2}$ codewords, respectively.
\end{proof}
\par
By Propositions 3 and 4, we have the following corollaries directly.
\begin{corollary}
Let $\mathscr{C}=((f_1(x)+2g_1(x)|0))$ be a double cyclic code of length $(r, s)$ over $R$ with $g_1(x)|f_1(x)|(x^r-1)$ and ${\rm deg}(f_1(x))=t_1$, ${\rm deg}(g_1(x))=t_2$, $h_1(x)=\frac{x^r-1}{f_1(x)}$. Let
\begin{equation*}
S_1=\bigcup_{i=0}^{r-t_1-1}\{ x^i*(f_1(x)+2g_1(x)|0)\}
\end{equation*}
and
\begin{equation*}
S_2=\bigcup_{i=0}^{t_1-t_2-1}\{ x^i*(2h_1(x)g_1(x)|0)\},
\end{equation*}
then $S_1\cup S_2$ forms a minimal generating set for $\mathscr{C}$ as an $R$-submodule of $R_{r,s}$. Moreover, $\mathscr{C}$ is equivalent to a cyclic code $(f_1(x)+2g_1(x))$ of length $r$ over $R$, and $\mathscr{C}$ has $4^{r-t_1}2^{t_1-t_2}$ codewords.
\end{corollary}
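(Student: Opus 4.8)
The plan is to obtain this statement as the degenerate case of Proposition 4 in which the second coordinate is forced to vanish, and equivalently to recognize $\mathscr{C}$ as an ordinary cyclic code of length $r$ over $R$ transported through the projection onto the first component. First I would observe that every codeword of $\mathscr{C}=(f_1(x)+2g_1(x)|0)$ has the form $(m(x)(f_1(x)+2g_1(x))\,|\,0)$ for some $m(x)\in R[x]$, so the second coordinate is identically zero. Consequently the projection $\pi:\mathscr{C}\to R[x]/(x^r-1)$, $\pi((c_1(x)|0))=c_1(x)$, is an injective $R[x]$-module homomorphism whose image is precisely the cyclic code $(f_1(x)+2g_1(x))$ of length $r$ over $R$; this gives the claimed equivalence and reduces both the generating-set and the cardinality questions to the corresponding statements for that cyclic code.

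Next I would read off the minimal generating set and the size directly from the first half of the proof of Proposition 4. Indeed, in the notation of that proof the contribution of the first generator $(f_1(x)+2g_1(x)|0)$ was handled in a self-contained way: using the division of an arbitrary multiplier $p(x)$ first by $h_1(x)=\frac{x^r-1}{f_1(x)}$ and then, on the quotient, by $\frac{x^r-1}{h_1(x)g_1(x)}$, one shows that $p(x)*(f_1(x)+2g_1(x)|0)\in{\rm Span}(S_1\cup S_2)$. Since here there is no second generator, the sets $S_3$ and $S_4$ are empty (equivalently, one may take $f_2(x)=g_2(x)=x^s-1$, so that $r_1=r_2=s$ and the index ranges $0\le i\le s-r_1-1$ and $0\le i\le r_1-r_2-1$ become vacuous), and the spanning argument of Proposition 4 collapses to the assertion that $S_1\cup S_2$ generates $\mathscr{C}$ as an $R$-module.

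For minimality and the count I would argue exactly as in Proposition 4. The $r-t_1$ elements of $S_1$ contribute independent coordinates over $R=\mathbb{Z}_4$, hence a factor $4^{r-t_1}$, while the $t_1-t_2$ elements of $S_2$, being of the $2$-torsion form $2h_1(x)g_1(x)$, each contribute a factor $2$, hence $2^{t_1-t_2}$; moreover no element of $S_1\cup S_2$ is an $R$-linear combination of the others. Multiplying the two contributions gives $|\mathscr{C}|=4^{r-t_1}2^{t_1-t_2}$, which also matches the Proposition 4 formula $4^{r+s-t_1-r_1}2^{t_1+r_1-t_2-r_2}$ under the substitution $r_1=r_2=s$.

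The work here is essentially bookkeeping, so I do not expect a genuine obstacle; the only point that requires a little care is confirming that the degeneration is consistent, i.e., that the hypothesis $\phi(\mathscr{C})=0$ of case (i) of Proposition 3 really does force the second-coordinate data $f_2(x)+2g_2(x)$ to generate the zero ideal and hence $S_3,S_4=\varnothing$. Once that is checked, the statement is an immediate specialization, and nothing beyond the cyclic-code structure theory over $\mathbb{Z}_4$ (as in \cite{Wan}) is needed.
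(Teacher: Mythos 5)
Your proposal is correct and follows exactly the paper's route: the paper derives this corollary ``directly'' from Propositions 3 and 4, i.e., by specializing Proposition 4 to the case where the second generator is absent (so $S_3=S_4=\varnothing$ and the count $4^{r+s-t_1-r_1}2^{t_1+r_1-t_2-r_2}$ reduces to $4^{r-t_1}2^{t_1-t_2}$), with the equivalence to the length-$r$ cyclic code $(f_1(x)+2g_1(x))$ immediate from the vanishing second coordinate. Your added care about the degenerate convention $f_2(x)=g_2(x)=x^s-1$ (equivalently $\phi(\mathscr{C})=0$ in case (i) of Proposition 3) is a sound way to make the specialization explicit, though the paper leaves it implicit.
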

\begin{corollary}
Let $\mathscr{C}=(l(x)|f_2(x)+2g_2(x))$ be a double cyclic code of length $(r,s)$ over $R$ with $(x^r-1)|\frac{x^s-1}{g_2(x)}l(x)$ and $g_2(x)|f_2(x)|(x^s-1)$. Let ${\rm deg}(f_2(x))=r_1$, ${\rm deg}(g_2(x))=r_2$ and $h_2(x)=\frac{x^s-1}{f_2(x)}$. Let
\begin{equation*}
S_3=\bigcup_{i=0}^{s-r_1-1}\{ x^i*(l(x)|f_2(x)+2g_2(x))\}
\end{equation*}
and
\begin{equation*}
S_4=\bigcup_{i=0}^{r_1-r_2-1}\{ x^i*(h_2(x)l(x)|2h_2(x)g_2(x))\},
\end{equation*}
then $S_3\cup S_4$ forms a minimal generating set for $\mathscr{C}$ as an $R$-submodule of $R_{r,s}$. Moreover, $\mathscr{C}$ has $4^{s-r_1}2^{r_1-r_2}$ codewords.
\end{corollary}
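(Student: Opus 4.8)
The plan is to obtain Corollary 2 as the single-generator specialization of Proposition 4. Write $G = (l(x)\,|\,f_2(x)+2g_2(x))$ for the lone generator, so that every codeword has the form $c(x) = q(x)*G$ for some $q(x) \in R[x]$. Since the first-type generator $(f_1(x)+2g_1(x)\,|\,0)$ is absent here, the sets $S_1$ and $S_2$ never arise, and I only need to show that the reduction carried out on the $q(x)$-part in the proof of Proposition 4 terminates inside $\mathrm{Span}(S_3 \cup S_4)$.

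First I would run the two successive divisions exactly as in Proposition 4. Reducing $q(x)$ modulo $h_2(x) = (x^s-1)/f_2(x)$ writes $q(x) = h_2(x)q_1(x) + r_1(x)$ with $r_1(x) = 0$ or $\deg r_1(x) \le s - r_1 - 1$; the remainder term $r_1(x)*G$ lands in $\mathrm{Span}(S_3)$, and $h_2(x)*G = (h_2 l \,|\, 2h_2 g_2)$ because $h_2 f_2 = x^s - 1 \equiv 0$. Reducing $q_1(x)$ modulo $f_2(x)/g_2(x) = (x^s-1)/(h_2 g_2)$ then splits $q_1(x)*(h_2 l \,|\, 2h_2 g_2)$ into a remainder lying in $\mathrm{Span}(S_4)$ plus the single leftover term $\bigl(\tfrac{x^s-1}{g_2(x)}q_2(x)l(x)\,\big|\,0\bigr)$.

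The crucial observation — and the point where the hypothesis of the corollary does its work — is that this leftover term vanishes. In Proposition 4 it had to be absorbed into $\mathrm{Span}(S_1 \cup S_2)$ via the factorization $\tfrac{x^s-1}{g_2(x)}l(x) = (f_1+2g_1)k(x)$; here, instead, the standing hypothesis $(x^r - 1) \mid \tfrac{x^s-1}{g_2(x)}l(x)$ says precisely that $\tfrac{x^s-1}{g_2(x)}l(x) \equiv 0$ in $R[x]/(x^r-1)$, so the first coordinate of the leftover is zero in $R_{r,s}$ and the whole term equals $(0\,|\,0)$. Hence $S_3 \cup S_4$ already generates $\mathscr{C}$, with no contribution from the $r$-part.

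For minimality and the count I would mirror Proposition 4. The $s - r_1$ shifts in $S_3$ are $R$-linearly independent and each contributes a factor $4$, giving $4^{s-r_1}$. For $S_4$, the computation $2*(h_2 l \,|\, 2h_2 g_2) = (2h_2 l \,|\, 0) = 2h_2 * G$ (again using $h_2 f_2 \equiv 0$) shows that doubling any generator of $S_4$ returns an element already expressible through $G$; consequently each of the $r_1 - r_2$ generators of $S_4$ has additive order $2$ modulo the span of the remaining generators and contributes only a factor $2$, giving $2^{r_1-r_2}$. Hence $|\mathscr{C}| = 4^{s-r_1}2^{r_1-r_2}$. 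The delicate point — and the step I expect to be the main obstacle — is precisely this last bookkeeping: one must confirm that the degree bounds defining $S_3$ and $S_4$ leave no redundancy and correctly separate the part of $\mathscr{C}$ that is free over $R = \mathbb{Z}_4$ from the $2$-torsion contributed by the $g_2$-factor, so that the two counts multiply as claimed.
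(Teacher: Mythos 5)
Your spanning argument is correct and is essentially the paper's own: the paper obtains this corollary by specializing the proof of Proposition 4 to the single-generator case, performing exactly your two divisions (by $h_2(x)$, then by $(x^s-1)/(h_2(x)g_2(x))=f_2(x)/g_2(x)$), and your observation that the leftover term $\bigl(\tfrac{x^s-1}{g_2(x)}q_2(x)l(x)\,\big|\,0\bigr)$ is annihilated directly by the hypothesis $(x^r-1)\mid \tfrac{x^s-1}{g_2(x)}l(x)$, rather than absorbed into $\mathrm{Span}(S_1\cup S_2)$, is the right specialization of the paper's use of the factorization through $f_1(x)+2g_1(x)$.

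The counting step, however, contains a genuine gap --- precisely at the point you yourself flag as ``the main obstacle'' and then pass over. The identity $2*(h_2(x)l(x)\,|\,2h_2(x)g_2(x))=(2h_2(x)l(x)\,|\,0)=2h_2(x)*G$ only exhibits this element as an $R[x]$-multiple of $G$, which is true of every codeword; it does not place it in the $R$-span of the other listed generators, which is what the factor-$2$ bookkeeping requires. Indeed $2h_2(x)$ has degree $s-r_1$, one more than the bound defining $S_3$, and dividing $2h_2(x)$ by $h_2(x)$ merely regenerates $2*(h_2(x)l(x)\,|\,2h_2(x)g_2(x))$, so your argument is circular. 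What is actually needed is $(x^r-1)\mid 2h_2(x)l(x)$, and this does not follow from the stated hypotheses: take $r=1$, $s=3$, $l(x)=1$, $f_2(x)=x-1$, $g_2(x)=1$, so that $g_2\mid f_2\mid x^3-1$ and $(x-1)\mid(x^3-1)\cdot 1$ hold, with $h_2(x)=x^2+x+1$. Here $2h_2(x)*G=(2h_2(1)\,|\,0)=(2\,|\,0)$, which lies in $\mathscr{C}$ but not in $\mathrm{Span}(S_3)$ (no polynomial of degree at most $1$ annihilates $x+1$ in $\mathbb{Z}_4[x]/(x^3-1)$), so the lone $S_4$ generator has order $4$, not $2$, modulo $\mathrm{Span}(S_3)$; one checks $|\mathscr{C}|=4^3=64$ rather than the claimed $4^{2}\cdot 2=32$. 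So the order-$2$ claim, and with it the size formula, requires the additional condition $(x^r-1)\mid 2h_2(x)l(x)$ (equivalently, triviality of the kernel of the projection of $\mathscr{C}$ onto the last $s$ coordinates). In fairness, the paper's own proof of Proposition 4 is equally silent here --- it declares minimality ``obvious'' and assigns the contributions $4^{s-r_1}$ and $2^{r_1-r_2}$ by fiat, and its examples happen to satisfy the missing condition --- so your proposal faithfully reproduces the paper's argument, gap included; but a complete proof must either derive $2*(h_2(x)l(x)\,|\,2h_2(x)g_2(x))\in\mathrm{Span}(S_3)$ from the hypotheses or add the missing hypothesis.
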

\begin{corollary}
Let $\mathscr{C}$ be a double cyclic code of length $(r,s)$ over $R$. \\
{\rm (i)}~If $\mathscr{C}=((f_1(x)+2g_1(x)|0))$ with $g_1(x)|f_1(x)|(x^r-1)$ and ${\rm deg}(f_1(x))=t_1$, ${\rm deg}(g_1(x))=t_2$, $h_1(x)=\frac{x^r-1}{f_1(x)}$, then any codeword of $\mathscr{C}$ is of the form
\begin{equation*}
c(x)=a(x)*(f_1(x)+2g_1(x)|0)+b(x)*(2h_1(x)g_1(x)|0),
\end{equation*}
where $a(x)\in R[x]$, $b(x)\in \mathbb{F}_2[x]$ are polynomials with degrees $r-t_1-1$ and $t_1-t_2-1$ respectively.\\
{\rm (ii)}~If $\mathscr{C}=((l(x)|f_2(x)+2g_2(x)))$ with $(x^r-1)|\frac{x^s-1}{g_2(x)}l(x)$ and $g_2(x)|f_2(x)|(x^s-1)$, ${\rm deg}(f_2(x))=r_1$, ${\rm deg}(g_2(x))=r_2$, $h_2(x)=\frac{x^s-1}{f_2(x)}$, then any codeword of $\mathscr{C}$ is of the form
\begin{equation*}
c(x)=a(x)*(l(x)|f_2(x)+2g_2(x))+b(x)*(h_2(x)l(x)|2h_2(x)g_2(x)),
\end{equation*}
where $a(x)\in R[x]$, $b(x)\in \mathbb{F}_2[x]$ are polynomials with degrees $s-r_1-1$ and $r_1-r_2-1$ respectively.\\
{\rm (iii)}~If $\mathscr{C}=((f_1(x)+2g_1(x)|0), (l(x)|f_2(x)+2g_2(x)))$ with $g_1(x)|f_1(x)|(x^r-1)$, ${\rm deg}(l(x))< {\rm deg}(f_1(x)+2g_1(x))$, $(f_1(x)+2g_1(x))|\frac{x^s-1}{g_2(x)}l(x)$ and $g_2(x)|f_2(x)|(x^s-1)$. Let ${\rm deg}(f_1(x))=t_1$, ${\rm deg}(g_1(x))=t_2$, ${\rm deg}(f_2(x))=r_1$, ${\rm deg}(g_2(x))=r_2$, $h_1(x)=\frac{x^r-1}{f_1(x)}$ and $h_2(x)=\frac{x^s-1}{f_2(x)}$, then any codeword of $\mathscr{C}$ is of the form
\begin{equation*}
\begin{split}
  c(x)=&a(x)*(f_1(x)+2g_1(x)|0)+b(x)*(2h_1(x)g_1(x)|0)+u(x) \\
       &*(l(x)|f_2(x)+2g_2(x))+v(x)*(h_2(x)l(x)|2h_2(x)g_2(x)),
 \end{split}
 \end{equation*}
where $a(x), u(x) \in R[x]$ and $b(x), v(x) \in \mathbb{F}_2[x]$ are polynomials with degrees $r-t_1-1$, $s-r_1-1$, $t_1-t_2-1$ and $r_1-r_2-1$ respectively.
\end{corollary}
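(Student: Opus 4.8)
The plan is to read off all three parts from the minimal generating sets already in hand, converting each family of cyclic shifts into a single polynomial coefficient. Part (i) is governed by Corollary 1, part (ii) by Corollary 2, and part (iii) by Proposition 4; in each case I already possess an explicit minimal $R$-generating set, so the only work is to repackage an arbitrary $R$-linear combination of its elements into the stated closed form and to identify the coefficient ring attached to each block.

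I would carry out part (iii) first, since (i) and (ii) are its one-sided specializations. By Proposition 4 every $c(x)\in\mathscr{C}$ is an $R$-linear combination of the elements of $S_1\cup S_2\cup S_3\cup S_4$. Collecting the $S_1$-terms, I write $\sum_{i=0}^{r-t_1-1}a_i\,x^i*(f_1(x)+2g_1(x)|0)=a(x)*(f_1(x)+2g_1(x)|0)$ with $a(x)=\sum_i a_ix^i\in R[x]$ of degree at most $r-t_1-1$, and the same bundling of $S_3$ produces the term $u(x)*(l(x)|f_2(x)+2g_2(x))$ with $u(x)\in R[x]$ of degree at most $s-r_1-1$. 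Collecting $S_2$ and $S_4$ likewise yields $b(x)*(2h_1(x)g_1(x)|0)$ and $v(x)*(h_2(x)l(x)|2h_2(x)g_2(x))$, and it remains only to argue that $b(x),v(x)$ may be taken in $\mathbb{F}_2[x]$, with degrees at most $t_1-t_2-1$ and $r_1-r_2-1$.

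The genuine point is this coefficient reduction. For the $S_2$-block it is immediate: the generator $(2h_1(x)g_1(x)|0)$ is even, so $2*(2h_1(x)g_1(x)|0)=(4h_1(x)g_1(x)|0)=0$, and writing each coefficient $a_i=a_{i,0}+2a_{i,1}$ with $a_{i,0},a_{i,1}\in\{0,1\}$ shows the action factors through $R/(2)\cong\mathbb{F}_2$; hence $b(x)\in\mathbb{F}_2[x]$. For the $S_4$-block the generator need not be even, so here I would use $2h_2(x)(f_2(x)+2g_2(x))=2h_2(x)f_2(x)+4h_2(x)g_2(x)=2(x^s-1)=0$ in $R[x]/(x^s-1)$, which gives $2*(h_2(x)l(x)|2h_2(x)g_2(x))=(2h_2(x)l(x)|0)=2h_2(x)*(l(x)|f_2(x)+2g_2(x))$. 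This last element lies in $\mathscr{C}$ with vanishing second coordinate, hence in $Ker(\phi)={\rm Span}(S_1\cup S_2)$; thus the even part of the $S_4$-coefficients can be absorbed into the $a(x),b(x)$ already present, leaving $v(x)\in\mathbb{F}_2[x]$. The degree bounds in every block are inherited directly from the index ranges in the unions defining $S_1,\dots,S_4$.

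Finally, parts (i) and (ii) follow verbatim by restricting the argument to $S_1\cup S_2$ and to $S_3\cup S_4$ respectively. I expect the only delicate step to be the $S_4$-reduction showing $v(x)\in\mathbb{F}_2[x]$; in the pure quaternary case (ii) one must check that $(2h_2(x)l(x)|0)$ is still redundant relative to $S_3$, which is exactly where the hypothesis $(x^r-1)\mid\frac{x^s-1}{g_2(x)}l(x)$ enters. Everything else is routine bookkeeping already implicit in the size count $4^{r+s-t_1-r_1}2^{t_1+r_1-t_2-r_2}$ of Proposition 4.
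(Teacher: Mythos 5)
Your proposal takes the same route as the paper, which in fact offers no written proof at all: the corollary is asserted to follow ``directly'' from Propositions 3 and 4, and your unpacking---bundling each family of shifts $x^i*(\cdot)$ into one polynomial coefficient, then reducing the $S_2$- and $S_4$-coefficients mod $2$---is exactly the omitted derivation. Your $S_2$-reduction ($2*(2h_1(x)g_1(x)|0)=0$) is correct, and your $S_4$-reduction in case (iii), via $2*(h_2(x)l(x)|2h_2(x)g_2(x))=(2h_2(x)l(x)|0)=2h_2(x)*(l(x)|f_2(x)+2g_2(x))\in Ker(\phi)$, is the right idea. Note only that the identification $Ker(\phi)={\rm Span}(S_1\cup S_2)$ you invoke is not a formal consequence of the divisibility hypotheses listed in the corollary, but of the canonical presentation of Proposition 2, in which $(f_1(x)+2g_1(x)|0)$ generates the whole kernel; granting that standing convention, as the paper tacitly does, your argument for (i) and (iii) is complete and supplies precisely the detail the paper skips.

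The step you flagged in case (ii) is, however, a genuine gap, and your proposed fix does not close it. The hypothesis $(x^r-1)\mid\frac{x^s-1}{g_2(x)}l(x)$ kills the overflow terms $\left(\frac{x^s-1}{g_2(x)}q_2(x)l(x)\,\middle|\,0\right)$ that arise when an $S_4$-coefficient is divided by $\frac{f_2(x)}{g_2(x)}$ (that is its role in the proof of Proposition 4), but it does \emph{not} make $(2h_2(x)l(x)|0)$ redundant relative to $S_3$: for that one needs the separate condition $(x^r-1)\mid 2h_2(x)l(x)$, which does not follow. Concretely, take $r=1$, $s=3$, $l(x)=1$, $f_2(x)=x^3-1$, $g_2(x)=x^2+x+1$, so $h_2(x)=1$ and $f_2(x)+2g_2(x)\equiv 2(x^2+x+1)$ in $R_{1,3}$; then $(x-1)\mid\frac{x^3-1}{g_2(x)}l(x)=x-1$ holds, $S_3$ is empty and $S_4=\{(1|2(x^2+x+1))\}$, yet $2*(1|2(x^2+x+1))=(2|0)\in\mathscr{C}$ is not of the stated form, and $\mathscr{C}$ has $4$ codewords rather than the predicted $2$. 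To be fair, this lacuna sits in the paper itself---Corollary 2's size count fails on the same example---because the stated divisibility records only that $\frac{x^s-1}{g_2(x)}l(x)\equiv 0 \pmod{x^r-1}$, whereas a genuinely one-generator code (i.e., one with $Ker(\phi)$ inside the displayed span) also requires $2h_2(x)l(x)\equiv 0\pmod{x^r-1}$. So your instinct about where the delicacy lies was right; the honest conclusion is that case (ii) needs this extra tacit hypothesis, under which your absorption argument goes through verbatim.
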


Define the Lee weights of the elements $0, 1, 2, 3$ of $R=\mathbb{Z}_4$ as $0, 1, 2, 1$, respectively. Moreover, the Lee weight of an $n$-tuple in $R^n$ is the sum of the Lee weights of its components. The Gray map $\Phi$ sends the elements $0,1,2,3$ of $R$ to $(0,0), (0,1), (1,1), (1,0)$ over $\mathbb{F}_2$, respectively. It is easy to verify that $\Phi$ is a distance-preserving map from ($R^n$, Lee distance) to ($\mathbb{F}_2^{2n}$, Hamming distance), but it is not an additive group homomorphism from $R^n$ to $\mathbb{F}_2^{2n}$ (see Chapter 3 in \cite{Wan}). Compared to the linear codes in the Code Tables \cite{Grassl} and nonlinear binary codes on the Table of Nonlinear Binary Codes \cite{Litsyn}, following examples computed by the computational algebra system Magma \cite{Bosma} show that some optimal or suboptimal nonlinear binary codes can be obtained from double cyclic codes over $R$.

\begin{example}
Let $\mathscr{C}=(l(x)|f_2(x)+2g_2(x))$ be a double cyclic code of length $(1, 7)$ over $R$ with $l(x)=1$ and $f_2(x)=g_2(x)=x^3+2x^2+x+3$ over $R$, then $S_3=\bigcup_{i=0}^3\{x^i*(1|3x^3+2x^2+3x+1)\}$ forms the generating set of $\mathscr{C}$, and $\mathscr{C}$ has $4^4$ codewords. Thus, as an $R$-submodule of $R_{1,7}$, $\mathscr{C}$ has the following matrix $G$ as its generator matrix
\begin{equation*} G=\left(
  \begin{array}{cccccccc}
    1 & 1 & 3& 2 & 3 & 0 & 0 & 0\\
    1 & 0 & 1& 3 & 2 & 3 & 0 & 0\\
    1 & 0 & 0& 1 & 3 & 2 & 3 & 0\\
    1 & 0 & 0& 0 & 1 & 3 & 2 & 3\\
  \end{array}
\right).
\end{equation*}
Further, the minimum Lee distance of $\mathscr{C}$ is $6$, which implies that $\Phi(\mathscr{C})$ is a $(16, 2^8, 6)$ nonlinear binary code, which is an optimal nonlinear binary code (see \cite{Litsyn}). The Lee weight enumerator is
\begin{equation*}
x^{16}+112x^2y^6+30x^8y^8+112x^2y^{10}+y^{16}.
\end{equation*}
In fact, the double cyclic code $\mathscr{C}$ given above is none other than the quaternary Kerdock code $\mathcal {K}(3)$ (see Chapter 8 in \cite{Wan}).
\end{example}

\begin{example}
Let $\mathscr{C}=(l(x)|f_2(x)+2g_2(x))$ be a double cyclic code of length $(1, 23)$ over $R$ with $l(x)=1$ and $f_2(x)=g_2(x)=x^{11}+3x^{10}+2x^7+x^6+x^5+x^4+x^2+2x+3$ over $R$, then $S_3=\bigcup_{i=0}^{11}\{x^i*(1|x^{11}+3x^{10}+2x^7+x^6+x^5+x^4+x^2+2x+3)\}$ forms the generating set of $\mathscr{C}$, and $\mathscr{C}$ has $4^{12}$ codewords. Further, the minimum Lee distance of $\mathscr{C}$ is $12$, which implies that $\Phi(\mathscr{C})$ is a nonlinear binary code $(48, 2^{24}, 12)$. This code gives the best parameters of nonlinear binary code (see \cite{Litsyn}). The Lee weight enumerator is
\begin{equation*}
\begin{split}
x^{48}&+12144x^{36}y^{12}+61824x^{34}y^{14}+195063x^{32}y^{16}+1133440x^{30}y^{18}+1445136x^{28}y^{20} \\
      &+4080384x^{26}y^{22}+2921232x^{24}y^{24}+4080384x^{22}y^{26}+1445136x^{20}y^{28}+1133440x^{18}y^{30}\\
      &+195063x^{16}y^{32}+61824x^{14}y^{34}+12144x^{12}y^{36}+y^{48}.
 \end{split}
 \end{equation*}

\end{example}

\begin{example}
Let $\mathscr{C}=(l(x)|f_2(x)+2g_2(x))$ be a double cyclic code of length $(3, 63)$ over $R$ with $l(x)=1+x+x^2$ and $f_2(x)=g_2(x)=x^{56} + 2x^{55} + 3x^{54} + 2x^{53} + 3x^{52} + 2x^{51} + 2x^{50} + 3x^{49} + x^{48} +x^{45} + 2x^{43} + x^{41} + 2x^{40} + 2x^{39} + x^{38} + x^{36} + 3x^{35} + 2x^{34} +3x^{33} + x^{32} + 2x^{31} + 3x^{28} + x^{27} + x^{26} + 2x^{25} + x^{24} + 2x^{22} +3x^{19} + 3x^{18} + x^{16} + x^{14} + x^{13} + 3x^{12} + 2x^{11} + 3x^9 + 3x^8 +3x^7 + 3x^6 + 3x^4 + 3x^3 + x^2 + x + 1$ over $R$, then $S_3=\bigcup_{i=0}^{6}\{x^i*(l(x)|3f_2(x))\}$ forms the generating set of $\mathscr{C}$, and $\mathscr{C}$ has $4^{7}$ codewords. Further, the minimum Lee distance of $\mathscr{C}$ is $56$, which implies that $\Phi(\mathscr{C})$ is a nonlinear binary code $(132, 2^{14}, 56)$. This code gives the parameters of the best known linear code with parameters $[132, 14, 56]$ (see \cite{Grassl}). The Lee weight enumerator is
\begin{equation*}
\begin{split}
x^{132}&+1260x^{76}y^{56}+2016x^{74}y^{58}+756x^{72}y^{60}+2079x^{58}y^{64}+4160x^{56}y^{66} \\
       &+2079x^{54}y^{68}+756x^{50}y^{72}+2016x^{48}y^{74}+1260x^{46}y^{76}+y^{132}.
 \end{split}
 \end{equation*}
\end{example}

\begin{example}
Let $\mathscr{C}=(l(x)|f_2(x)+2g_2(x))$ be a double cyclic code of length $(1, 15)$ over $R$ with $l(x)=1$ and $f_2(x)=g_2(x)=1+2x+x^2+2x^3+3x^5+3x^6+3x^8+x^9+x^{10}$ over $R$, then $S_3=\bigcup_{i=0}^{4}\{x^i*(1|1+2x+x^2+2x^3+3x^5+3x^6+3x^8+x^9+x^{10})\}$ forms the generating set of $\mathscr{C}$, and $\mathscr{C}$ has $4^{5}$ codewords. Further, the minimum Lee distance of $\mathscr{C}$ is $12$, which implies that $\Phi(\mathscr{C})$ is a nonlinear binary code $(32, 2^{10}, 12)$. This code has fewer codewords than the comparable best known nonlinear binary code (see \cite{Litsyn}). But this code has the parameters of best known binary linear code which is also optimal (see \cite{Grassl}). The Lee weight enumerator is
\begin{equation*}
x^{32}+240x^{20}y^{12}+542x^{16}y^{16}+240x^{12}y^{20}+y^{32}.
\end{equation*}
\end{example}

\section{Dual codes}
In this section, we determine the relationship of the generators between the double cyclic code and its dual. Let $\mathscr{C}=((f_1(x)+2g_1(x)|0), (l(x)|f_2(x)+2g_2(x)))$ be a double cyclic code of length $(r,s)$ with $g_1(x)|f_1(x)|(x^r-1)$ and $g_2(x)|f_2(x)|(x^s-1)$ over $R$. For simplicity, we denote the polynomials $f_1(x)+2g_1(x)$ and $f_2(x)+2g_2(x)$ by $F_1(x)$ and $F_2(x)$ respectively. In this section, we assume that the polynomials $F_1(x)$ and $F_2(x)$ are monic over $R$.
\par
From Proposition 1, we know that if $\mathscr{C}=((F_1(x)|0), (l(x)|F_2(x)))$ is a double cyclic code of length $(r, s)$ over $R$, then the dual code $\mathscr{C}^\perp$ is also a double cyclic code of length $(r,s)$ over $R$. We denote $\mathscr{C}^\perp=((\widehat{F}_1(x)|0), (\widehat{l}(x)|\widehat{F}_2(x)))$. Let $f(x)\in R[x]$ with degree $t$, then its reciprocal polynomial is denoted by $f^*(x)=x^tf(1/x)$. Further, we denote the polynomial $\sum_{i=0}^{m-1}x^i$ by $\theta_m(x)$. Let $k={\rm lcm}(r,s)$. By \cite[Definition 4.3]{Borges2}, we define the following map:

\begin{equation*}
\varphi:~R_{r,s}\times R_{r,s}\rightarrow R[x]/(x^k-1)
\end{equation*}
such that for any $c_1(x)=(c_{1,1}(x)|c_{1,2}(x))$ and $c_2(x)=(c_{2,1}(x)|c_{2,2}(x))$ of $R_{r,s}$, we have $\varphi((c_1(x), c_2(x)))=c_{1,1}(x)\theta_{\frac{k}{r}}(x^r)x^{k-1-{\rm deg}(c_{2,1}(x))}c^*_{2,1}(x)+c_{1,2}(x)\theta_{\frac{k}{s}}(x^s)x^{k-1-{\rm deg}(c_{2,2}(x))}c^*_{2,2}(x)$. The map $\varphi$ is a bilinear map between $R[x]$-modules.
\begin{lemma}
Let $c_1$ and $c_2$ be elements of $R^r\times R^s$ with associated polynomials $c_1(x)=(c_{1,1}(x)|c_{1,2}(x))$ and $c_2(x)=(c_{2,1}(x)|c_{2,2}(x))$ respectively, then $c_1$ is orthogonal to $c_2$ and all its cyclic shifts if and only if $\varphi((c_1(x), c_2(x)))=0$.
\end{lemma}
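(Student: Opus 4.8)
The plan is to show that the $k$ coefficients of the polynomial $\varphi((c_1(x),c_2(x)))$, regarded in $R[x]/(x^k-1)$, are exactly the $k$ inner products $c_1\cdot T^{j}(c_2)$ for $j=0,1,\ldots,k-1$. Since $k=\mathrm{lcm}(r,s)$ gives $T^{k}(c_2)=c_2$, these values of $j$ exhaust all cyclic shifts of $c_2$, so $c_1$ is orthogonal to $c_2$ and all its shifts if and only if every one of these inner products vanishes, which is precisely the condition $\varphi((c_1(x),c_2(x)))=0$. Thus the entire lemma reduces to a coefficient-by-coefficient matching with inner products.

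First I would write $c_1=(a_0,\dots,a_{r-1}\mid b_0,\dots,b_{s-1})$ and $c_2=(a'_0,\dots,a'_{r-1}\mid b'_0,\dots,b'_{s-1})$ and record, from the definition of $T$, that the first block of $T^{j}(c_2)$ has $i$-th entry $a'_{(i-j)\bmod r}$ and the second block has $i$-th entry $b'_{(i-j)\bmod s}$, so that
\begin{equation*}
c_1\cdot T^{j}(c_2)=\sum_{i=0}^{r-1}a_i\,a'_{(i-j)\bmod r}+\sum_{i=0}^{s-1}b_i\,b'_{(i-j)\bmod s}.
\end{equation*}
Next I would simplify the reciprocal factors: since $c^*_{2,1}(x)=x^{\deg c_{2,1}}c_{2,1}(1/x)$, the prefactor $x^{\,k-1-\deg c_{2,1}}$ cancels the degree dependence and gives $x^{\,k-1-\deg c_{2,1}}c^*_{2,1}(x)=\sum_{i=0}^{r-1}a'_i\,x^{\,k-1-i}$, and likewise for the second block with $b'_i$ and $s$. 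This is worth emphasizing, since it shows that the construction does not really depend on the individual degrees of the components of $c_2$, only on anchoring the reversed word at the power $x^{k-1}$.

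The core computation is to expand, modulo $x^k-1$, the first summand $c_{1,1}(x)\,\theta_{k/r}(x^r)\bigl(\sum_i a'_i x^{\,k-1-i}\bigr)$ and read off the coefficient of $x^{\,k-1-j}$. A monomial $a_p a'_i x^{\,p+rt+(k-1-i)}$ contributes exactly when $p+rt-i\equiv -j\pmod k$ with $0\le p,i\le r-1$ and $0\le t\le k/r-1$; the purpose of $\theta_{k/r}(x^r)$ is precisely that, for each fixed $p$, exactly one of its $k/r$ shifts $rt$ drives the index into the range $\{0,\dots,r-1\}$, namely $i=(p+j)\bmod r$, so the surviving contribution from the first block is $\sum_p a_p\,a'_{(p+j)\bmod r}$. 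The identical argument with $\theta_{k/s}(x^s)$ and modulus $s$ produces $\sum_p b_p\,b'_{(p+j)\bmod s}$ from the second block, and because both blocks were anchored at the same power $x^{k-1}$ they are collected under a common shift parameter $j$. Hence the coefficient of $x^{\,k-1-j}$ in $\varphi$ equals $c_1\cdot T^{-j}(c_2)$; the orientation of the shift is immaterial because $\{T^{-j}(c_2)\}_{j}=\{T^{j}(c_2)\}_{j}$, and the biconditional follows.

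The main obstacle I anticipate is the bookkeeping in this last step: one must verify carefully that the periodizing factor $\theta_{k/r}(x^r)$ collapses the length-$k$ cyclic convolution down to the genuine length-$r$ cyclic correlation, i.e. that among the $k/r$ admissible values of $t$ exactly one places the index $i$ in $\{0,\dots,r-1\}$ and that no spurious cross-terms survive the reduction modulo $x^k-1$. Once this counting is pinned down for a single block, the parallel argument for the other block and the assembly into the final equivalence are routine.
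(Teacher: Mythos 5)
Your proof is correct and follows essentially the same route as the paper's: both expand $\varphi((c_1(x),c_2(x)))$ modulo $x^k-1$ and identify its $k$ coefficients with the inner products of $c_1$ against the cyclic shifts of $c_2$, using $\theta_{k/r}(x^r)$ (resp.\ $\theta_{k/s}(x^s)$) to collapse the length-$k$ expansion to the length-$r$ (resp.\ length-$s$) cyclic correlations. Your explicit bookkeeping of the shift orientation --- the coefficient of $x^{k-1-j}$ equals $c_1\cdot T^{-j}(c_2)$, which is harmless since $\{T^{-j}(c_2)\}_{j}$ and $\{T^{j}(c_2)\}_{j}$ are the same set --- is in fact slightly more careful than the paper's display, which writes this coefficient as $S_i=c_1\cdot T^{i}(c_2)$ with the index direction silently reversed, an inessential slip that your remark repairs.
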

\begin{proof}
Let $$c_1=(c_{1,0},c_{1,1},\ldots,c_{1,r-1}|c'_{1,0},c'_{1,1},\ldots,c'_{1,s-1})$$ and $$c_2=(c_{2,0},c_{2,1},\ldots,c_{2,r-1}|c'_{2,0},c'_{2,1},\ldots,c'_{2,s-1}).$$
Let $$c^{(i)}_2=((c_{2,-i},c_{2,-i+1},\ldots, c_{2,r-1},c_{2,0}, \ldots, c_{2,-i-1})|(c'_{2,-i},c'_{2,-i+1},\ldots,c'_{2,s-1},c'_{2,0}, \ldots, c'_{2,-i-1}))$$ be the $i$-th cyclic shift of $c_2$, where $0\leq i \leq k-1$. Then $c_1 \cdot c^{(i)}_2=0$ if and only if $\sum_{j=0}^{r-1}c_{1,j}c_{2,j-i}+\sum_{n=0}^{s-1}c'_{1,n}c'_{2,n-i}=0$. Let $S_i=\sum_{j=0}^{r-1}c_{1,j}c_{2,j-i}+\sum_{n=0}^{s-1}c'_{1,n}c'_{2,n-i}$, then
\begin{equation*}
\begin{split}
\varphi (c_1(x), c_2(x))&=\sum_{u=0}^{r-1}\left(\theta_{\frac{k}{r}}(x^r)\sum_{j=0}^{r-1}c_{1,j}c_{2,j-u}x^{k-1-u}\right)+ \sum_{t=0}^{s-1}\left(\theta_{\frac{k}{s}}(x^s)\sum_{n=0}^{s-1}c'_{1,n}c'_{2,n-t}x^{k-1-t}\right) \\
                        &= \sum_{i=0}^{k-1}S_ix^{k-1-i}
 \end{split}
 \end{equation*}
in $R[x]/(x^k-1)$. Thus, $\varphi(c_1(x), c_2(x))=0$ if and only if $S_i=0$ for all $0\leq i \leq k-1$.
\end{proof}
\begin{lemma}
Let $(F_1(x)|0)$ and $(l(x)|F_2(x))$ belong to $R_{r,s}$, where $\varphi((F_1(x)|0), (l(x)|F_2(x)))=0$, then $F_1(x)l^*(x)=0$ in $R[x]/(x^r-1)$. Respectively, if $(0|D_1(x))$ and $(m(x)|D_2(x))$ are elements of $R_{r,s}$ with the condition $\varphi((0|D_1(x)),(m(x)|D_2(x)))=0$, then $D_1(x)D^*_2(x)=0$ in $R[x]/(x^s-1)$.
\end{lemma}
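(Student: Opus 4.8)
The plan is to feed the two prescribed elements straight into the definition of $\varphi$ and exploit the fact that one of the two summands vanishes in each case, so that the hypothesis $\varphi(\cdot,\cdot)=0$ collapses to a single product relation in $R[x]/(x^k-1)$. For the first assertion I write $c_1(x)=(F_1(x)|0)$ and $c_2(x)=(l(x)|F_2(x))$, so that $c_{1,2}(x)=0$ annihilates the second summand of $\varphi$, and the hypothesis reads
\[
F_1(x)\,\theta_{\frac{k}{r}}(x^r)\,x^{k-1-{\rm deg}(l(x))}\,l^*(x)=0 \quad\text{in } R[x]/(x^k-1).
\]
The goal is to descend this relation from modulus $x^k-1$ to modulus $x^r-1$, after which it becomes exactly $F_1(x)l^*(x)=0$.

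First I would discard the power of $x$: since $x$ is a unit modulo $x^k-1$ (with inverse $x^{k-1}$), multiplying the displayed relation by $x^{{\rm deg}(l(x))+1}$ converts the factor $x^{k-1-{\rm deg}(l(x))}$ into $x^k\equiv 1$, leaving $F_1(x)\,l^*(x)\,\theta_{\frac{k}{r}}(x^r)\equiv 0 \pmod{x^k-1}$. Next I would record the factorization $x^k-1=(x^r-1)\,\theta_{\frac{k}{r}}(x^r)$, which follows from $\theta_{k/r}(y)=(y^{k/r}-1)/(y-1)$ evaluated at $y=x^r$. Lifting the congruence back to $R[x]$ produces a polynomial $Q(x)$ with $F_1(x)l^*(x)\,\theta_{\frac{k}{r}}(x^r)=Q(x)(x^r-1)\,\theta_{\frac{k}{r}}(x^r)$. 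Cancelling the common factor $\theta_{\frac{k}{r}}(x^r)$ then gives $F_1(x)l^*(x)=Q(x)(x^r-1)$, that is, $F_1(x)l^*(x)=0$ in $R[x]/(x^r-1)$, as required. The second (``respectively'') assertion is entirely parallel: taking $c_1(x)=(0|D_1(x))$ makes the first summand of $\varphi$ vanish, leaving $D_1(x)\,\theta_{\frac{k}{s}}(x^s)\,x^{k-1-{\rm deg}(D_2(x))}\,D^*_2(x)\equiv 0$; deleting the unit power of $x$ exactly as above and cancelling the monic $\theta_{\frac{k}{s}}(x^s)$ against $x^k-1=(x^s-1)\,\theta_{\frac{k}{s}}(x^s)$ yields $D_1(x)D^*_2(x)=0$ in $R[x]/(x^s-1)$.

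The step I expect to be the main obstacle, and the only place where care is genuinely needed, is the cancellation of $\theta_{\frac{k}{r}}(x^r)$. Because $R=\mathbb{Z}_4$ is not an integral domain, $R[x]$ has zero divisors and one cannot cancel an arbitrary common factor as one would over a field. What rescues the argument is that $\theta_{\frac{k}{r}}(x^r)$ is \emph{monic}: for any nonzero $g(x)$ the product $\theta_{\frac{k}{r}}(x^r)g(x)$ retains the leading coefficient of $g(x)$ in top degree, so $\theta_{\frac{k}{r}}(x^r)$ is a non-zero-divisor and hence cancellable in $R[x]$. I would therefore make this monicity explicit (recalling that $F_1(x)$, $F_2(x)$ are assumed monic and that $\theta_m$ is monic by construction) and invoke it as the justification for passing from the factored identity to $F_1(x)l^*(x)=Q(x)(x^r-1)$; the remaining manipulations are the routine unit-removal and factorization steps already indicated.
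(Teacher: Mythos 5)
Your proof is correct and takes essentially the same route as the paper's: evaluate $\varphi$ on the given pair so one summand vanishes, strip the unit power of $x$ modulo $x^k-1$, and cancel the factor $\theta_{\frac{k}{r}}(x^r)$ against the factorization $x^k-1=(x^r-1)\theta_{\frac{k}{r}}(x^r)$. If anything you are more careful than the paper, which performs this cancellation silently in passing from its first displayed equation to the second, whereas you explicitly justify it by noting that $\theta_{\frac{k}{r}}(x^r)$ is monic and hence a non-zero-divisor in $\mathbb{Z}_4[x]$.
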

\begin{proof}
By the definition of the map $\varphi$, we have that
\begin{equation*}
\varphi((F_1(x)|0), (l(x)|F_2(x)))=F_1(x)\theta_{\frac{k}{r}}(x^r)x^{k-1-{\rm deg}(l(x))}l^*(x)=0
\end{equation*}
in $R[x]/(x^k-1)$, which implies that there exists a polynomial $f(x)\in R[x]$ such that $F_1(x)\theta_{\frac{k}{r}}(x^r)x^{k-1-{\rm deg}(l(x))}l^*(x)=f(x)(x^k-1)$. Suppose that $g(x)=f(x)x^{{\rm deg}(l(x))+1}$, then we have
\begin{equation*}
F_1(x)x^kl^*(x)=f(x)x^{{\rm deg}(l(x))+1}(x^r-1).
\end{equation*}
Since $x$ and $x^r-1$ are coprime to each other, it follows that $F_1(x)l^*(x)=0$ in $R[x]/(x^r-1)$.
The same argument can be used to prove the other case.
\end{proof}
\begin{proposition}
Let $\mathscr{C}=((F_1(x)|0), (l(x)|F_2(x)))$ be a double cyclic code of length $(r,s)$ over $R$. Let $\mathscr{C}^\perp=((\widehat{F}_1(x)|0), (\widehat{l}(x)|\widehat{F}_2(x)))$ be its dual code, then
\begin{equation*}
\widehat{F}^*_1(x){\rm gcd}(F_1(x), l(x))=\lambda(x)(x^r-1)
\end{equation*}
for some $\lambda(x)\in R[x]$.
\end{proposition}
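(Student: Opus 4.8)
The plan is to exploit that the generator $(\widehat{F}_1(x)|0)$ of $\mathscr{C}^\perp$ is orthogonal to every codeword of $\mathscr{C}$, and in particular to the two generators $(F_1(x)|0)$ and $(l(x)|F_2(x))$ of $\mathscr{C}$ together with all of their cyclic shifts. Since orthogonality to a vector and to all of its shifts is, by Lemma 3, equivalent to the vanishing of the bilinear form $\varphi$, the first step is to record the two identities
\begin{equation*}
\varphi((\widehat{F}_1(x)|0),(F_1(x)|0))=0,\qquad \varphi((\widehat{F}_1(x)|0),(l(x)|F_2(x)))=0 .
\end{equation*}

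Next I would feed each identity into Lemma 4. In both cases the first argument has second coordinate $0$, so the second summand of $\varphi$ drops out and Lemma 4 applies verbatim, with the roles of its $F_1$ and $l$ played once by $\widehat{F}_1$ and $F_1$, and once by $\widehat{F}_1$ and $l$. This produces, in $R[x]/(x^r-1)$, the relations
\begin{equation*}
\widehat{F}_1(x)F_1^*(x)=0,\qquad \widehat{F}_1(x)l^*(x)=0 .
\end{equation*}
To move the star onto $\widehat{F}_1$, I would use that $r$ is odd, so $x$ is a unit in $R[x]/(x^r-1)$ (its inverse being $x^{r-1}$) and the substitution $x\mapsto x^{-1}$ defines a ring automorphism $\sigma$ of $R[x]/(x^r-1)$. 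Applying $\sigma$ to the two relations and multiplying through by suitable powers of $x$ converts $\widehat{F}_1(x)$ into $\widehat{F}_1^*(x)$ and each reciprocal $F_1^*(x)$, $l^*(x)$ into $(F_1^*)^*(x)$, $(l^*)^*(x)$; since $(a^*)^*$ equals $a$ up to a unit power of $x$ in $R[x]/(x^r-1)$, this yields
\begin{equation*}
\widehat{F}_1^*(x)F_1(x)=0,\qquad \widehat{F}_1^*(x)l(x)=0 \quad\text{in } R[x]/(x^r-1).
\end{equation*}

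Finally, because $r$ is odd, $x^r-1$ factors into pairwise coprime basic irreducibles over $R$, so by the Chinese remainder theorem $R[x]/(x^r-1)$ is a principal ideal ring. Hence the ideal $(F_1(x),l(x))$ is principal, generated by $\gcd(F_1(x),l(x))$, and in particular $\gcd(F_1(x),l(x))=\alpha(x)F_1(x)+\beta(x)l(x)$ for some $\alpha(x),\beta(x)\in R[x]$. Multiplying this Bézout relation by $\widehat{F}_1^*(x)$ and invoking the two relations above gives $\widehat{F}_1^*(x)\gcd(F_1(x),l(x))=0$ in $R[x]/(x^r-1)$, which is precisely $\widehat{F}_1^*(x)\gcd(F_1(x),l(x))=\lambda(x)(x^r-1)$ for some $\lambda(x)\in R[x]$.

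I expect the two delicate points to be the reciprocal bookkeeping and the Bézout step. For the former, one must track the extraneous powers of $x$ introduced by $\sigma$ and confirm that $(a^*)^*$ recovers $a$ only up to a unit, which is legitimate precisely because $x$ is invertible modulo $x^r-1$. For the latter, the assertion $\gcd(F_1,l)\in(F_1,l)$ is what fails over a general coefficient ring such as $\mathbb{Z}_4[x]$; it is rescued here by the principal-ideal-ring structure of $R[x]/(x^r-1)$, and hence hinges crucially on the hypothesis that $r$ is odd.
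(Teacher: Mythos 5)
Your proposal is correct and takes essentially the same route as the paper: apply Lemma 3 to $(\widehat{F}_1(x)|0)$ against the two generators of $\mathscr{C}$, feed the resulting identities into Lemma 4 to get the annihilation relations in $R[x]/(x^r-1)$, and conclude via a gcd step. The only difference is that you make explicit two points the paper's proof glosses over --- moving the reciprocal onto $\widehat{F}_1(x)$ using that $x$ is a unit modulo $x^r-1$, and justifying $\gcd(F_1(x),l(x))\in(F_1(x),l(x))$ via the principal-ideal-ring structure of $R[x]/(x^r-1)$ for odd $r$ --- both of which are legitimate and strengthen the argument.
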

\begin{proof}
Since the elements $(\widehat{F}_1(x)|0)$ and $(\widehat{l}(x)|\widehat{F}_2(x))$ belong to $\mathscr{C}^\perp$, it follows that, by Lemma 3, $\varphi((\widehat{F}_1(x)|0), (F_1(x)|0))=0$ and $\varphi((\widehat{l}(x)|\widehat{F}_2(x)), (F_1(x)|0))=0$ in $R[x]/(x^k-1)$. Therefore, by Lemma 4, we have that $F_1(x)\widehat{F}^*_1(x)=0$ and $l(x)\widehat{F}^*_1(x)=0$ in $R[x]/(x^r-1)$. It means that ${\rm gcd}(F_1(x), l(x))\widehat{F}^*_1(x)=0$ in $R[x]/(x^r-1)$. Thus, there exists a polynomial $\lambda(x)\in R[x]$ such that $\widehat{F}^*_1(x){\rm gcd}(F_1(x), l(x))=\lambda(x)(x^r-1)$.
\end{proof}
\begin{proposition}
Let $\mathscr{C}=((F_1(x)|0), (l(x)|F_2(x)))$ be a double cyclic code of length $(r,s)$ over $R$. Let $\mathscr{C}^\perp=((\widehat{F}_1(x)|0), (\widehat{l}(x)|\widehat{F}_2(x)))$ be its dual code, then
\begin{equation*}
\widehat{F}^*_2(x)F_1(x)F_2(x)=\mu(x)(x^s-1){\rm gcd}(F_2(x), l(x))
\end{equation*}
for some $\mu(x) \in R[x]$.
\end{proposition}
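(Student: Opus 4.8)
The plan is to imitate the proof of Proposition~5, but to work in the second coordinate and hence modulo $x^s-1$ rather than $x^r-1$. Since $\mathscr{C}^\perp$ is itself double cyclic (Proposition~1), the generator $(\widehat{l}(x)|\widehat{F}_2(x))$ and all of its cyclic shifts lie in $\mathscr{C}^\perp$, so each is orthogonal to every codeword of $\mathscr{C}$. I would feed this into Lemma~3, pairing $(\widehat{l}|\widehat{F}_2)$ successively against the two generators $(F_1(x)|0)$ and $(l(x)|F_2(x))$ of $\mathscr{C}$, to obtain $\varphi((F_1(x)|0),(\widehat{l}(x)|\widehat{F}_2(x)))=0$ and $\varphi((l(x)|F_2(x)),(\widehat{l}(x)|\widehat{F}_2(x)))=0$ in $R[x]/(x^k-1)$.

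The first of these is handled exactly as in Lemma~4: because the second coordinate of $(F_1(x)|0)$ vanishes, it collapses to $F_1(x)\widehat{l}^*(x)=0$ in $R[x]/(x^r-1)$. The second is a genuinely mixed identity, $l(x)\theta_{\frac{k}{r}}(x^r)x^{k-1-{\rm deg}(\widehat{l})}\widehat{l}^*(x)+F_2(x)\theta_{\frac{k}{s}}(x^s)x^{k-1-{\rm deg}(\widehat{F}_2)}\widehat{F}_2^*(x)=0$. Here is the key maneuver: multiply this identity by $F_1(x)$. Since $\theta_{\frac{k}{r}}(x^r)(x^r-1)=x^k-1\equiv 0$ and $F_1\widehat{l}^*$ is a multiple of $x^r-1$, the entire first summand becomes a multiple of $x^k-1$ and drops out, leaving $F_1(x)F_2(x)\theta_{\frac{k}{s}}(x^s)x^{k-1-{\rm deg}(\widehat{F}_2)}\widehat{F}_2^*(x)\equiv 0 \pmod{x^k-1}$. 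Applying the coprimality-of-$x$-and-$x^s-1$ argument from Lemma~4 (together with $\theta_{\frac{k}{s}}(x^s)(x^s-1)=x^k-1$) then yields $F_1(x)F_2(x)\widehat{F}_2^*(x)\equiv 0\pmod{x^s-1}$, i.e. $(x^s-1)\mid F_1F_2\widehat{F}_2^*$.

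It remains to upgrade the divisor from $x^s-1$ to $(x^s-1)\gcd(F_2,l)$. For this I would produce a second vanishing statement of the same shape but carrying $l(x)$ in place of $F_2(x)$, namely an identity forcing $l$ times the same cofactor times $\widehat{F}_2^*$ to be divisible by $x^s-1$; pairing $(\widehat{l}|\widehat{F}_2)$ against the codeword of $\mathscr{C}$ with zero first coordinate obtained by clearing $l$ from the first block, i.e. a combination $a(x)*(F_1(x)|0)+b(x)*(l(x)|F_2(x))$ with $aF_1+bl=0$, and then invoking the second half of Lemma~4, is the natural source. With a B\'ezout-type relation $\gcd(F_2,l)=u(x)F_2(x)+v(x)l(x)$ in $R[x]$ (which over $\mathbb{Z}_4[x]$ requires the care discussed below), the two relations combine to insert the factor $\gcd(F_2,l)$ and produce $\widehat{F}_2^*(x)F_1(x)F_2(x)=\mu(x)(x^s-1)\gcd(F_2(x),l(x))$.

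I expect this last paragraph to be the main obstacle. Tracking exactly which greatest common divisor surfaces is delicate: clearing the first block naturally involves $\gcd(F_1,l)$, whereas the target requires $\gcd(F_2,l)$, so the companion relation must be arranged so that the $s$-world divisor is the one that persists. Compounding this, $\mathbb{Z}_4[x]$ is not a principal ideal domain, so greatest common divisors and B\'ezout identities, as well as the multiplicativity of the reciprocal operation $f(x)\mapsto f^*(x)$, must be justified for the specific monic, $(x^s-1)$-related polynomials at hand rather than taken for granted; the hypotheses $g_2(x)|f_2(x)|(x^s-1)$ together with the constraint $(f_1(x)+2g_1(x))\mid\frac{x^s-1}{g_2(x)}l(x)$ of Lemma~2 are what I would lean on to keep these manipulations legitimate.
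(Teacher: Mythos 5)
Your first two steps are sound, and they are in fact a different (valid) route to the weaker statement $(x^s-1)\mid F_1(x)F_2(x)\widehat{F}_2^*(x)$: pairing $(\widehat{l}(x)|\widehat{F}_2(x))$ against $(l(x)|F_2(x))$, multiplying the mixed identity by $F_1(x)$, and killing the first summand via $F_1(x)\widehat{l}^*(x)=\nu(x)(x^r-1)$ all works. The genuine gap is your third step, and as sketched it cannot be repaired. A B\'ezout relation ${\rm gcd}(F_2,l)=u(x)F_2(x)+v(x)l(x)$ puts the gcd on the wrong side of the divisibility: from relations of the form $(x^s-1)\mid A(x)F_2(x)\widehat{F}_2^*(x)$ and $(x^s-1)\mid A(x)l(x)\widehat{F}_2^*(x)$ you could only conclude $(x^s-1)\mid A(x)\,{\rm gcd}(F_2,l)\,\widehat{F}_2^*(x)$, i.e.\ the gcd lands in the dividend, whereas the proposition needs it as an extra factor of the \emph{divisor} $(x^s-1)$. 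To enlarge the modulus you must divide the codeword itself by the gcd \emph{before} applying Lemma~4, and the only gcd that can be divided out while clearing the first block is ${\rm gcd}(F_1,l)$. That is exactly the paper's one-line proof: the codeword
\begin{equation*}
c(x)=\frac{l(x)}{{\rm gcd}(F_1(x),l(x))}*(F_1(x)|0)-\frac{F_1(x)}{{\rm gcd}(F_1(x),l(x))}*(l(x)|F_2(x))=\left(0\,\Big|\,-\frac{F_1(x)F_2(x)}{{\rm gcd}(F_1(x),l(x))}\right)
\end{equation*}
lies in $\mathscr{C}$, so $\varphi(c(x),(\widehat{l}(x)|\widehat{F}_2(x)))=0$, and the second case of Lemma~4 at once gives $\frac{F_1(x)F_2(x)}{{\rm gcd}(F_1(x),l(x))}\widehat{F}_2^*(x)=0$ in $R[x]/(x^s-1)$, that is, $\widehat{F}_2^*(x)F_1(x)F_2(x)=\mu(x)(x^s-1)\,{\rm gcd}(F_1(x),l(x))$ — no B\'ezout identity and no two-stage upgrade.

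The obstruction you honestly flagged (clearing the first block produces ${\rm gcd}(F_1,l)$, not ${\rm gcd}(F_2,l)$) is real, and it points to a typo in the statement rather than to a missing trick: the paper's own proof establishes the identity with ${\rm gcd}(F_1(x),l(x))$, and this is the version consistent with the rest of the paper — the degree formula of Lemma~5, ${\rm deg}(\overline{\widehat{F}}_2)=s-{\rm deg}(\overline{F}_2)-{\rm deg}(\overline{F}_1)+{\rm deg}({\rm gcd}(\overline{F}_1,\overline{l}))$, matches ${\rm deg}\bigl(\frac{(x^s-1){\rm gcd}(\overline{F}_1,\overline{l})}{\overline{F}_1\overline{F}_2}\bigr)$ but not the displayed formula with ${\rm gcd}(\overline{F}_2,\overline{l})$, and the $\mathbb{Z}_2$ antecedent in \cite{Borges2} carries the gcd of the \emph{first}-block generator with $l$. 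Indeed, a degree count using Lemma~7 shows the literal ${\rm gcd}(F_2,l)$ version can fail whenever ${\rm deg}({\rm gcd}(F_2,l))>{\rm deg}({\rm gcd}(F_1,l))$. So the correct move was not to hunt for a companion relation carrying $l$ in the second coordinate, but to aim for ${\rm gcd}(F_1,l)$ and use the single codeword above; your concluding worries about B\'ezout identities and reciprocals over $\mathbb{Z}_4[x]$, while legitimate in general, then evaporate because the paper's construction needs only that ${\rm gcd}(F_1,l)$ divides both $F_1$ and $l$.
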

\begin{proof}
Let $$c(x)=\frac{l(x)}{{\rm gcd}(F_1(x), l(x))}*(F_1(x)|0)-\frac{F_1(x)}{{\rm gcd}(F_1(x), l(x))}*(l(x)|F_2(x))=\left(0|\frac{F_2(x)F_1(x)}{{\rm gcd}(F_1(x), l(x))}\right),$$ then $c(x)\in \mathscr{C}$, which implies that
\begin{equation*}
\varphi((0|F_2(x)\frac{F_1(x)}{{\rm gcd}(F_1(x), l(x))}), (\widehat{l}(x)| \widehat{F}_2(x)))=0
\end{equation*}
in $R[x]/(x^k-1)$. According to Lemma 4, there exists a polynomial $\mu(x)\in R[x]$ such that
$\widehat{F}^*_2(x)F_1(x)F_2(x)=\mu(x)(x^s-1){\rm gcd}(F_2(x), l(x))$.\end{proof}

\begin{proposition}
Let $\mathscr{C}=((F_1(x)|0), (l(x)|F_2(x)))$ be a double cyclic code of length $(r,s)$ over $R$. Let $\mathscr{C}^\perp=((\widehat{F}_1(x)|0), (\widehat{l}(x)|\widehat{F}_2(x)))$ be its dual code, then we have
\begin{equation*}
\widehat{l}^*(x)F_1(x)=\nu(x)(x^r-1)
\end{equation*}
for some $\nu(x) \in R[x]$.
\end{proposition}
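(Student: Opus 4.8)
The plan is to exploit the duality between $\mathscr{C}$ and $\mathscr{C}^\perp$ through the bilinear pairing $\varphi$, exactly as in the proofs of Propositions 6 and 7, and then to read off the stated divisibility from Lemma 4. The key observation is that the generator $(F_1(x)|0)$ lies in $\mathscr{C}$ while $(\widehat{l}(x)|\widehat{F}_2(x))$ lies in $\mathscr{C}^\perp$, and that feeding the two arguments into $\varphi$ in the right order produces the reciprocal pattern $\widehat{l}^*(x)F_1(x)$ on the nose.

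First I would record that $(F_1(x)|0)\in\mathscr{C}$, and that since $\mathscr{C}^\perp$ is again a double cyclic code by Proposition 1, every cyclic shift of $(\widehat{l}(x)|\widehat{F}_2(x))$ remains inside $\mathscr{C}^\perp$. Because $(F_1(x)|0)$ is orthogonal to all of $\mathscr{C}^\perp$, it is in particular orthogonal to $(\widehat{l}(x)|\widehat{F}_2(x))$ and to each of its cyclic shifts. This is exactly the hypothesis of Lemma 3 applied to $c_1=(F_1(x)|0)$ and $c_2=(\widehat{l}(x)|\widehat{F}_2(x))$, so I obtain
\begin{equation*}
\varphi((F_1(x)|0),(\widehat{l}(x)|\widehat{F}_2(x)))=0
\end{equation*}
in $R[x]/(x^k-1)$.

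Next I would feed this identity into Lemma 4. Since the first argument $(F_1(x)|0)$ has vanishing second component, the defining expression for $\varphi$ collapses to its first summand $F_1(x)\theta_{\frac{k}{r}}(x^r)x^{k-1-{\rm deg}(\widehat{l}(x))}\widehat{l}^*(x)$, which is precisely the configuration treated in the first part of Lemma 4, now with the roles of $F_1(x)$ and $l(x)$ played by $F_1(x)$ and $\widehat{l}(x)$. Lemma 4 then gives $F_1(x)\widehat{l}^*(x)=0$ in $R[x]/(x^r-1)$, i.e. there exists $\nu(x)\in R[x]$ with $\widehat{l}^*(x)F_1(x)=\nu(x)(x^r-1)$, which is the claim.

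The bookkeeping here is entirely routine; the only point that needs genuine care is the passage from a single orthogonality relation to the vanishing of $\varphi$, which secretly encodes orthogonality against all $k$ cyclic shifts. This is where the fact that $\mathscr{C}^\perp$ is itself double cyclic is indispensable, and it is also the reason I pair $(F_1(x)|0)$ with $(\widehat{l}(x)|\widehat{F}_2(x))$ in this particular order: had I reversed the two arguments I would instead land on $\widehat{l}(x)F_1^*(x)=0$ and would then need the extra step of transporting this relation through the reciprocation automorphism $x\mapsto x^{-1}$ of $R[x]/(x^r-1)$ to recover $\widehat{l}^*(x)F_1(x)=0$. Choosing the order above sidesteps that detour completely.
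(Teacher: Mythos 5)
Your proof is correct and takes essentially the same route as the paper: both deduce, via Lemma 3 and the fact that $\mathscr{C}^\perp$ is itself double cyclic (Proposition 1), that $\varphi$ vanishes on a pair formed from $(F_1(x)|0)\in\mathscr{C}$ and $(\widehat{l}(x)|\widehat{F}_2(x))\in\mathscr{C}^\perp$, and then conclude by Lemma 4. If anything your execution is slightly cleaner than the paper's, which first forms $c(x)=(\widehat{F}_1(x)|0)+(\widehat{l}(x)|\widehat{F}_2(x))$ and uses bilinearity to isolate $\varphi((\widehat{l}(x)|\widehat{F}_2(x)),(F_1(x)|0))=0$ --- a redundant detour, and with the arguments in the order opposite to the one Lemma 4 literally covers, so its citation of Lemma 4 silently absorbs the reciprocation step you mention --- whereas your ordering $\varphi((F_1(x)|0),(\widehat{l}(x)|\widehat{F}_2(x)))=0$ matches the hypothesis of Lemma 4 verbatim and yields $\widehat{l}^*(x)F_1(x)=\nu(x)(x^r-1)$ directly.
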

\begin{proof}
Let $c(x)=(\widehat{F}_1(x)|0)+(\widehat{l}(x)|\widehat{F}_2(x))$, then $c(x)\in \mathscr{C}^\perp$. Since the map $\varphi$ is bilinear between $R[x]$-module, we have that
\begin{equation*}
\begin{split}
\varphi(c(x), (F_1(x)|0))&= \varphi((\widehat{F}_1(x)|0), (F_1(x)|0))+\varphi(\widehat{l}(x)|\widehat{F}_2(x), (F_1(x)|0))\\
                         &= \varphi(\widehat{l}(x)|\widehat{F}_2(x), (F_1(x)|0))\\
                         &=0
 \end{split}
 \end{equation*}
 in $R[x]/(x^k-1)$. In light of Lemma 4, there exists a polynomial $\nu(x)\in R[x]$ such that $F_1(x)\widehat{l}^*(x)=\nu(x)(x^r-1)$.
\end{proof}
\par
Define $\overline{\mathscr{C}}$ be the residue code of $\mathscr{C}$, i.e. $\overline{\mathscr{C}}=\{\overline{c}=c~{\rm mod}~2|~c\in \mathscr{C}\}$. Clearly, $\overline{\mathscr{C}}$ is a $\mathbb{Z}_2$-double cyclic code. If $\mathscr{C}=((F_1(x)|0), (l(x)|F_2(x)))$, then $\overline{\mathscr{C}}=((\overline{F}_1(x)|0), (\overline{l}(x)|\overline{F}_2(x)))$, where $\overline{f}(x)$ denotes the polynomial over $\mathbb{Z}_2$ with the coefficients of the polynomial $f(x)$ mod $2$. Therefore, by Corollaries 4.7, 4.8 and Proposition 4.18 in \cite{Borges2}, we have the following lemma directly.
\begin{lemma}
Let $\mathscr{C}=((F_1(x)|0), (l(x)|F_2(x)))$ be a double cyclic code of length $(r,s)$ over $\mathbb{Z}_4$ with $\mathscr{C}^\perp=((\widehat{F}_1(x)|0), (\widehat{l}(x)|\widehat{F}_2(x)))$, then we have
\begin{equation*}
{\rm deg}(\overline{\widehat{F}}_1(x))=r-{\rm deg}({\gcd}(\overline{F}_1(x),\overline{l}(x))),
\end{equation*}
\begin{equation*}
{\rm deg}(\overline{\widehat{F}}_2(x))=s-{\rm deg}(\overline{F}_2(x))-{\rm deg}(\overline{F}_1(x))+{\rm deg}({\gcd}(\overline{F}_1(x),\overline{l}(x))).
\end{equation*}
Further, let $A(x)=\frac{\overline{l}(x)}{{\rm gcd}(\overline{F}_1(x), \overline{l}(x))}$, then we have
\begin{equation*}
\left(\overline{\nu}(x)x^{k-{\rm deg}(\overline{l}(x))-1}A^*(x)+x^{k-{\rm deg}(\overline{F}_2(x))-1}\right)=0~{\rm mod}~ \left(\frac{\overline{F}^*_1(x)}{{\rm gcd}^*(\overline{F}_1(x),\overline{l}(x))}\right).
\end{equation*}
\end{lemma}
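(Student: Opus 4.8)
The plan is to reduce the entire situation modulo $2$ and then quote the corresponding structure theory for $\mathbb{Z}_2$-double cyclic codes from \cite{Borges2}. First I would check that $\overline{\mathscr{C}}$ is genuinely a $\mathbb{Z}_2$-double cyclic code whose generator polynomials are the mod-$2$ reductions of those of $\mathscr{C}$, namely $\overline{\mathscr{C}}=((\overline{F}_1(x)|0),(\overline{l}(x)|\overline{F}_2(x)))$. This uses that reduction mod $2$ is a ring homomorphism $R[x]\to\mathbb{F}_2[x]$ compatible with the $R[x]$-module structure on $R_{r,s}$, together with the fact that $g_1|f_1|(x^r-1)$ and $g_2|f_2|(x^s-1)$ over $R$ descend to $\overline{F}_1=\overline{f}_1\mid(x^r-1)$ and $\overline{F}_2=\overline{f}_2\mid(x^s-1)$ over $\mathbb{F}_2$. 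Applying the same remark to the dual identifies the reduced generators $((\overline{\widehat{F}}_1(x)|0),(\overline{\widehat{l}}(x)|\overline{\widehat{F}}_2(x)))$ of $\overline{\mathscr{C}^\perp}$.

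Next I would obtain the two degree formulas by transporting Corollaries 4.7 and 4.8 of \cite{Borges2}. The identity $\deg(\overline{\widehat{F}}_1)=r-\deg(\gcd(\overline{F}_1,\overline{l}))$ is exactly the shape of the first-generator degree formula for the dual of a $\mathbb{Z}_2$-double cyclic code, and likewise for $\deg(\overline{\widehat{F}}_2)$. To make the transfer rigorous I would reduce Propositions 5 and 6 modulo $2$: Proposition 5 gives $(\overline{\widehat{F}}_1)^*(x)\,\overline{\gcd(F_1,l)}(x)=\overline{\lambda}(x)(x^r-1)$ in $\mathbb{F}_2[x]$, where the monic normalization of $F_1$ makes reduction commute with the reciprocal operation, and Proposition 6 supplies the companion relation for $\overline{\widehat{F}}_2$. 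Reading these reciprocal relations against the $\mathbb{F}_2$ divisibility chains then pins the two degrees to the stated values.

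Finally, for the congruence I would reduce Proposition 7, $\widehat{l}^*(x)F_1(x)=\nu(x)(x^r-1)$, modulo $2$ to get $(\overline{\widehat{l}})^*(x)\overline{F}_1(x)=\overline{\nu}(x)(x^r-1)$ over $\mathbb{F}_2$, and combine it with the explicit description of the dual's middle generator $\widehat{l}$ furnished by Proposition 4.18 of \cite{Borges2}. Writing $A(x)=\overline{l}(x)/\gcd(\overline{F}_1,\overline{l})$ for the relevant cofactor and dividing the previous identity through by $\gcd(\overline{F}_1,\overline{l})$ should convert it into the asserted congruence modulo $\overline{F}^*_1(x)/\gcd^*(\overline{F}_1,\overline{l})(x)$.

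The main obstacle I anticipate is the interaction of reduction mod $2$ with two operations that do not commute with it in general: passing to the dual code and taking greatest common divisors. Over $\mathbb{Z}_4$ one has $\overline{\mathscr{C}^\perp}\neq(\overline{\mathscr{C}})^\perp$ in general (already for $\mathscr{C}=(2)\subseteq\mathbb{Z}_4$), and since $R[x]$ is not a principal ideal domain the symbol $\gcd(F_1,l)$ and its reduction must be handled carefully so that $\overline{\gcd(F_1,l)}$ can be matched with $\gcd(\overline{F}_1,\overline{l})$. The role of the standing monic assumption on $F_1,F_2$ made at the start of Section 4 is precisely to control degree drops under reduction and to keep the reciprocal-polynomial bookkeeping consistent; verifying that these hypotheses suffice to legitimize the word ``directly'' in the statement is the delicate part of the argument.
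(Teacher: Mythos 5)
Your proposal matches the paper's treatment: the paper gives no argument for this lemma beyond observing that $\overline{\mathscr{C}}=((\overline{F}_1(x)|0),(\overline{l}(x)|\overline{F}_2(x)))$ is a $\mathbb{Z}_2$-double cyclic code and then citing Corollaries 4.7, 4.8 and Proposition 4.18 of \cite{Borges2} ``directly,'' which is exactly your reduce-mod-$2$-and-transfer plan. If anything you are more scrupulous than the paper, since you explicitly flag the one genuine subtlety it silently elides---that $\overline{\mathscr{C}^\perp}$ and $(\overline{\mathscr{C}})^\perp$ need not coincide over $\mathbb{Z}_4$ (your example $\mathscr{C}=(2)$; equality holds for free codes, the setting of the later Lemmas 6 and 7)---so that identifying the reduced dual generators $\overline{\widehat{F}}_i$ with the generators treated in \cite{Borges2} requires precisely the care you describe.
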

From Propositions 5, 6, 7 and Lemma 5, we have the following result.
\begin{corollary}
Let $\mathscr{C}=((F_1(x)|0), (l(x)|F_2(x)))$ be a double cyclic code of length $(r,s)$ over $R$. Let $\mathscr{C}^\perp=((\widehat{F}_1(x)|0), (\widehat{l}(x)|\widehat{F}_2(x)))$ be its dual code, then $\overline{\widehat{F_1}^*}(x)=\frac{x^r-1}{{\rm gcd}(\overline{F}_1(x), \overline{l}(x))}$ and $\overline{\widehat{F_2}^*}(x)=\frac{(x^s-1){\rm gcd}
(\overline{F}_2(x), \overline{l}(x))}{\overline{F}_1(x)\overline{F}_2(x)}$. Further,
\begin{equation*}
\overline{\nu}(x)=x^{k-{deg}(\overline{F}_2(x))+{\rm deg}(\overline{l}(x))}(A^*(x))^{-1}~{\rm mod}~ \left(\frac{\overline{F}^*_1(x)}{{\rm gcd}^*(\overline{F}_1(x),\overline{l}(x))}\right).
\end{equation*}
\end{corollary}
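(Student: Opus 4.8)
The plan is to reduce the three algebraic identities of Propositions 5, 6 and 7 modulo $2$ and to exploit that, since $r$ and $s$ are odd, the polynomials $x^r-1$ and $x^s-1$ are squarefree over $\mathbb{F}_2$, so $\mathbb{F}_2[x]/(x^r-1)$ and $\mathbb{F}_2[x]/(x^s-1)$ are products of fields and divisibility among divisors is governed factor-by-factor. Throughout I will use that each of $F_1,F_2,\widehat{F}_1,\widehat{F}_2$ is monic and prime to $x$ (being a divisor of $x^r-1$ or $x^s-1$), so reduction modulo $2$ commutes with taking reciprocals and drops no degree; in particular $\overline{\widehat{F}_i^{*}}=(\overline{\widehat{F}}_i)^{*}$ and $\deg\overline{\widehat{F}_i^{*}}=\deg\overline{\widehat{F}}_i$. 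This is what lets the exact degrees in Lemma 5 be matched against reduced reciprocals.

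For the first two identities I would reduce Propositions 5 and 6 modulo $2$, obtaining in the domain $\mathbb{F}_2[x]$ the polynomial identities $\overline{\widehat{F}_1^{*}}\,\gcd(\overline{F}_1,\overline{l})=\overline{\lambda}\,(x^r-1)$ and $\overline{\widehat{F}_2^{*}}\,\overline{F}_1\overline{F}_2=\overline{\mu}\,(x^s-1)\gcd(\overline{F}_2,\overline{l})$, and then compare degrees. Since Lemma 5 gives $\deg\overline{\widehat{F}_1^{*}}=r-\deg\gcd(\overline{F}_1,\overline{l})$, the first identity forces $\deg\overline{\lambda}=0$; as the only nonzero constant of $\mathbb{F}_2$ is $1$ and both sides are monic, $\overline{\lambda}=1$ and hence $\overline{\widehat{F}_1^{*}}=\tfrac{x^r-1}{\gcd(\overline{F}_1,\overline{l})}$. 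Likewise, feeding $\deg\overline{\widehat{F}_2^{*}}=s-\deg\overline{F}_2-\deg\overline{F}_1+\deg\gcd(\overline{F}_1,\overline{l})$ from Lemma 5 into the second identity forces its multiplier $\overline{\mu}$ to be a unit, which both guarantees that the right-hand side is a genuine polynomial and yields $\overline{\widehat{F}_2^{*}}=\tfrac{(x^s-1)\gcd(\overline{F}_2,\overline{l})}{\overline{F}_1\overline{F}_2}$.

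For the expression of $\overline{\nu}$ I would start from the congruence supplied by Lemma 5, namely $\left(\overline{\nu}(x)\,x^{k-\deg\overline{l}-1}A^{*}(x)+x^{k-\deg\overline{F}_2-1}\right)\equiv 0 \pmod{M}$ with $M:=\tfrac{\overline{F}^{*}_1}{\gcd^{*}(\overline{F}_1,\overline{l})}$ and $A=\overline{l}/\gcd(\overline{F}_1,\overline{l})$, and solve it for $\overline{\nu}$. This requires inverting $x^{k-\deg\overline{l}-1}A^{*}(x)$ modulo $M$. The factor $x$ is a unit modulo $M$ because $M\mid x^r-1$ and $\gcd(x,x^r-1)=1$; and $A^{*}$ is a unit modulo $M$ because $A=\overline{l}/\gcd(\overline{F}_1,\overline{l})$ and $\overline{F}_1/\gcd(\overline{F}_1,\overline{l})$ are coprime by maximality of the $\gcd$, and passing to reciprocals over $\mathbb{F}_2$ preserves coprimality (no factor involves $x$). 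Solving and using $-1=1$ in $\mathbb{F}_2$ gives $\overline{\nu}\equiv x^{\deg\overline{l}-\deg\overline{F}_2}(A^{*})^{-1}\pmod{M}$; finally, since $M\mid x^r-1\mid x^k-1$ we have $x^{k}\equiv 1\pmod{M}$, so the exponent may be raised by $k$ to the nonnegative value stated, giving $\overline{\nu}(x)=x^{k-\deg\overline{F}_2+\deg\overline{l}}(A^{*})^{-1}\bmod M$.

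The main obstacle I anticipate is the bookkeeping around reduction and reciprocation, specifically identifying the reductions of the $\mathbb{Z}_4$-gcds appearing in Propositions 5 and 6 with the honest $\mathbb{F}_2$-gcds $\gcd(\overline{F}_1,\overline{l})$ and $\gcd(\overline{F}_2,\overline{l})$, and reconciling which gcd governs the degree in Lemma 5; this is exactly where squarefreeness of $x^r-1$ and $x^s-1$ is essential, since it prevents distinct factors from merging under reduction and is what forces the multipliers $\overline{\lambda},\overline{\mu}$ to be units rather than merely low-degree. Once that degree reconciliation is in place, the passage from ``divides'' to ``equals'' is immediate, and the only remaining delicate point in the $\overline{\nu}$ computation is verifying the two invertibilities modulo $M$; the rest is routine manipulation of exponents using $x^{k}\equiv 1\pmod{M}$.
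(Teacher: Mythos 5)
Your proposal is correct and follows essentially the same route as the paper: reduce Propositions 5 and 6 modulo $2$, use the degree formulas of Lemma 5 to force $\overline{\lambda}=\overline{\mu}=1$, and solve the congruence of Lemma 5 for $\overline{\nu}$. The only difference is that the paper omits the $\overline{\nu}$ computation (deferring to Corollary 4.19 of \cite{Borges2}), and your explicit derivation --- checking that $x$ and $A^*(x)$ are invertible modulo $\frac{\overline{F}^*_1(x)}{{\rm gcd}^*(\overline{F}_1(x),\overline{l}(x))}$ and then shifting the exponent by $x^k\equiv 1$ --- is precisely the argument being referenced there.
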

\begin{proof}
By Proposition 5, we have that
\begin{equation*}
\overline{\widehat{F_1}^*}(x)=\overline{\lambda}(x)\frac{x^r-1}{{\rm gcd}(\overline{F}_1(x), \overline{l}(x))}.
\end{equation*}
Since ${\rm deg}(\overline{\widehat{F_1}^*}(x))={\rm deg}(\overline{\widehat{F}}_1(x))$, by Lemma 5, it follows that $\overline{\lambda}(x)=1$. Therefore, we have that
\begin{equation*}
\overline{\widehat{F_1}^*}(x)=\frac{x^r-1}{{\rm gcd}(\overline{F}_1(x), \overline{l}(x))}.
\end{equation*}
Similarly, one can also prove $\overline{\mu}(x)=1$, i.e. $\overline{\widehat{F_2}^*}(x)=\frac{(x^s-1){\rm gcd}(\overline{F}_2(x), \overline{l}(x))}{\overline{F}_1(x)\overline{F}_2(x)}$. The proof of the last part is similar to that of Corollary 4.19 in \cite{Borges2}, and is omitted here.
\end{proof}

Let $\mathscr{C}=((F_1(x)|0), (l(x)|F_2(x)))$ be a double cyclic code of length $(r,s)$ over $R$, then, according to Proposition 4, $\mathscr{C}$ is a free $R$-submodule of $R_{r,s}$ if and only if $F_1(x)|(x^r-1)$ and $F_2(x)|(x^s-1)$. In the rest of this paper, following the approach given in \cite{Borges2}, we will determine the explicit relationship of the free double cyclic code $\mathscr{C}$ and its dual $\mathscr{C}^\perp$.
\par
Let $\mathscr{C}_r$ be the canonical projection of $\mathscr{C}$ on the first $r$ coordinates and $\mathscr{C}_s$ on the last $s$ coordinates. The canonical projection is a linear map. Therefore, $\mathscr{C}_r$ and $\mathscr{C}_s$ are cyclic codes of length $r$ and $s$ over $R$, respectively.
\begin{lemma}
Let $\mathscr{C}=((F_1(x)|0), (l(x)|F_2(x)))$ be a free double cyclic code of length $(r,s)$ over $R$, then
\begin{equation*}
|\mathscr{C}_r|=4^{r-{\rm deg}(F_1(x))+\varepsilon},~|\mathscr{C}_s|=4^{s-{\rm deg}(F_2(x))};
\end{equation*}
\begin{equation*}
|(\mathscr{C}_r)^\perp|=4^{{\rm deg}(F_1(x))-\varepsilon},~|(\mathscr{C}_s)^\perp|=4^{{\rm deg}(F_2(x))};
\end{equation*}
\begin{equation*}
|(\mathscr{C}^\perp)_r|=4^{{\rm deg}(F_1(x))},~|(\mathscr{C}^\perp)_s|=4^{{\rm deg}(F_2(x))+\varepsilon};
\end{equation*}
where $\varepsilon={\rm deg}(F_1(x))-{\rm deg}({\rm gcd}(F_1(x),l(x)))$.
\end{lemma}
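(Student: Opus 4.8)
The plan is to compute each of the six cardinalities by relating the projections to the generating sets of the free code and then invoking the standard duality identity $|D| \cdot |D^\perp| = 4^{\ell}$ for any linear code $D$ of length $\ell$ over $R = \mathbb{Z}_4$. Throughout I use that $\mathscr{C}$ is free, so by the remark preceding the lemma we have $F_1(x)\mid(x^r-1)$ and $F_2(x)\mid(x^s-1)$; equivalently, in the notation of Proposition 4 we are in the situation $g_1 = f_1$, $g_2 = f_2$, so the generating contributions coming from $S_2$ and $S_4$ collapse, and the only generators are those of type $S_1$ and $S_3$.

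First I would treat $\mathscr{C}_s = \phi(\mathscr{C})$. By Proposition 2 and its proof, $\phi(\mathscr{C}) = (F_2(x))$ is the cyclic code of length $s$ generated by $F_2(x)$; since $F_2(x)\mid(x^s-1)$ this is a free cyclic code, so $|\mathscr{C}_s| = 4^{s-\mathrm{deg}(F_2(x))}$, and then $|(\mathscr{C}_s)^\perp| = 4^{s}/|\mathscr{C}_s| = 4^{\mathrm{deg}(F_2(x))}$. Next I would handle $\mathscr{C}_r$, the projection onto the first $r$ coordinates. This projection is generated by $F_1(x)$ together with the first component $l(x)$ of the mixed generator $(l(x)\mid F_2(x))$; concretely $\mathscr{C}_r = (F_1(x), l(x)) = (\mathrm{gcd}(F_1(x), l(x)))$ as a cyclic code of length $r$ over $R$. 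Because $F_1(x)\mid(x^r-1)$ and $\mathrm{gcd}(F_1(x),l(x))\mid F_1(x)$, this generator divides $x^r-1$, the code is free, and $|\mathscr{C}_r| = 4^{r - \mathrm{deg}(\mathrm{gcd}(F_1(x),l(x)))} = 4^{r - \mathrm{deg}(F_1(x)) + \varepsilon}$ by the very definition of $\varepsilon$. The dual size $|(\mathscr{C}_r)^\perp| = 4^{\mathrm{deg}(F_1(x)) - \varepsilon}$ then follows again from $|D|\,|D^\perp| = 4^r$.

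For the last two quantities I would pass to the dual code and reuse the first four. By Proposition 1, $\mathscr{C}^\perp = ((\widehat{F}_1(x)\mid 0),(\widehat{l}(x)\mid\widehat{F}_2(x)))$ is itself a double cyclic code, so $(\mathscr{C}^\perp)_r$ and $(\mathscr{C}^\perp)_s$ are computed by the same recipe applied to $\mathscr{C}^\perp$. The cleanest route, however, is to use the general orthogonality relation between a double cyclic code and its projections: the projection of $\mathscr{C}^\perp$ onto the first $r$ coordinates is the dual (inside $R[x]/(x^r-1)$) of the kernel $\mathcal{I} = \{c_1(x) : (c_1(x)\mid 0)\in\mathscr{C}\}$, which for the free code is exactly the cyclic code $(F_1(x))$. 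Hence $|(\mathscr{C}^\perp)_r| = 4^r / |(F_1(x))| = 4^{\mathrm{deg}(F_1(x))}$. Symmetrically, $(\mathscr{C}^\perp)_s$ is the dual of the analogous $r$-coordinate-kernel of $\mathscr{C}$, and combining the size count for $\mathscr{C}$ as a whole ($|\mathscr{C}| = 4^{r+s-\mathrm{deg}(F_1(x))-\mathrm{deg}(F_2(x))}$ from Proposition 4 in the free case) with $|\mathscr{C}^\perp| = 4^{r+s}/|\mathscr{C}|$ and the already-known value of $|(\mathscr{C}^\perp)_r|$ yields $|(\mathscr{C}^\perp)_s| = 4^{\mathrm{deg}(F_2(x)) + \varepsilon}$.

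The main obstacle I anticipate is justifying rigorously that the projection and kernel constructions interact with duality exactly as claimed — i.e. that $(\mathscr{C}^\perp)_r$ equals the dual of the first-coordinate kernel of $\mathscr{C}$, and that taking projections onto a block of coordinates and taking duals are compatible in the precise way the bookkeeping requires. This is where the asymmetry producing the $+\varepsilon$ in $|\mathscr{C}_r|$ but the $+\varepsilon$ in $|(\mathscr{C}^\perp)_s|$ (rather than in $|(\mathscr{C}^\perp)_r|$) comes from, since the mixed generator $l(x)$ enlarges the $r$-projection of $\mathscr{C}$ but correspondingly shrinks it after dualizing. I would pin this down by writing explicit generator matrices for $\mathscr{C}$ in the free case using the $S_1\cup S_3$ generating set from Proposition 4, reading off the projections columnwise, and checking the claimed ranks directly; the identity $|D|\,|D^\perp| = 4^\ell$ over $\mathbb{Z}_4$ then converts each rank count into the stated power of $4$ with no further effort.
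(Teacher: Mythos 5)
Your computation of the first four cardinalities is sound and matches the paper: $\mathscr{C}_r=(F_1(x),l(x))=({\rm gcd}(F_1(x),l(x)))$, $\mathscr{C}_s=\phi(\mathscr{C})=(F_2(x))$, both free because their generators divide $x^r-1$ resp.\ $x^s-1$, and the dual sizes follow from $|D|\cdot|D^\perp|=4^{\ell}$. For $|(\mathscr{C}^\perp)_r|$ your route is genuinely different from the paper's: you use the projection--shortening duality $(\mathscr{C}^\perp)_r=\mathcal{I}^\perp$ with $\mathcal{I}=\{c_1(x):(c_1(x)|0)\in\mathscr{C}\}=(F_1(x))$, whereas the paper reduces modulo $2$, invokes the $\mathbb{Z}_2$-double cyclic results of \cite{Borges2} through Corollary 4, and lifts back by uniqueness of the Hensel lift. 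Your duality claim does hold over $\mathbb{Z}_4$, but it is an obligation you must discharge: the clean argument is a size count, namely $(\mathscr{C}^\perp)_r\subseteq\mathcal{I}^\perp$ trivially, while $|(\mathscr{C}^\perp)_r|=|\mathscr{C}^\perp|/|\{c_2:(0|c_2)\in\mathscr{C}^\perp\}|$ combined with $|D||D^\perp|=4^\ell$ (valid since $\mathbb{Z}_4$ is a finite Frobenius ring) forces equality; this is easier and more robust than the explicit generator-matrix verification you propose. Done this way, your approach is self-contained and avoids the paper's dependence on the $\mathbb{Z}_2$ results and Hensel lifting, which is a real advantage.

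However, your derivation of the last quantity contains a genuine error. You propose to obtain $|(\mathscr{C}^\perp)_s|$ by combining $|\mathscr{C}^\perp|=4^{r+s}/|\mathscr{C}|=4^{{\rm deg}(F_1)+{\rm deg}(F_2)}$ with the already-known $|(\mathscr{C}^\perp)_r|=4^{{\rm deg}(F_1)}$; dividing gives $4^{{\rm deg}(F_2)}$, not the claimed $4^{{\rm deg}(F_2)+\varepsilon}$. The implicit identity $|\mathscr{C}^\perp|=|(\mathscr{C}^\perp)_r|\cdot|(\mathscr{C}^\perp)_s|$ is false here by exactly the factor $4^{\varepsilon}$: the quotient $|\mathscr{C}^\perp|/|(\mathscr{C}^\perp)_r|$ is the size of the \emph{shortening} $\{c_2:(0|c_2)\in\mathscr{C}^\perp\}$, not of the projection $(\mathscr{C}^\perp)_s$ --- the very projection/shortening distinction your own argument for $(\mathscr{C}^\perp)_r$ exploits. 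The fix is to apply your duality on the other block: $(\mathscr{C}^\perp)_s=\bigl(\{c_2(x):(0|c_2(x))\in\mathscr{C}\}\bigr)^\perp$, and to compute the size of that shortened code from the exact sequence of the first-block projection,
\begin{equation*}
|\{c_2(x):(0|c_2(x))\in\mathscr{C}\}|=\frac{|\mathscr{C}|}{|\mathscr{C}_r|}=4^{s-{\rm deg}(F_2(x))-\varepsilon},
\end{equation*}
equivalently by noting that this shortening is the cyclic code generated by $\frac{F_1(x)F_2(x)}{{\rm gcd}(F_1(x),l(x))}$, the codeword exhibited in the proof of Proposition 6. This yields $|(\mathscr{C}^\perp)_s|=4^{{\rm deg}(F_2(x))+\varepsilon}$, and with that correction your plan goes through in full.
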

\begin{proof}
Clearly, $\mathscr{C}_r=(F_1(x),l(x))=({\rm gcd}(F_1(x),l(x)))$. Since $F_1(x)|(x^r-1)$, it follows that ${\rm gcd}(F_1(x),l(x))|(x^r-1)$ and $|\mathscr{C}_r|=4^{r-{\rm gcd}(F_1(x),l(x))}$ and $|(\mathscr{C}_r)^\perp|=4^{{\rm deg}({\rm gcd}(F_1(x),l(x)))}$. The proof of $|\mathscr{C}_s|$ and $|(\mathscr{C}_s)^\perp|$ is similar to this. Since $\mathscr{C}$ is a free $R$-submodule of $R_{r,s}$, then $\mathscr{C}^\perp$ is also a free $R$-submodule of $R_{r,s}$. Therefore, $(\mathscr{C}^\perp)_r$ and $(\mathscr{C}^\perp)_s$ are the Hensel Lift of $\overline{(\mathscr{C}^\perp)_r}$ and $\overline{(\mathscr{C}^\perp)_s}$, respectively. From Corollary 4, we have that $\overline{(\mathscr{C}^\perp)_r}=({\rm gcd}(\overline{\widehat{F}}_1(x), \overline{\widehat{l}}(x)))=\left( \frac{x^r-1}{\overline{F}_1(x)}\right)$. Since $r$ is odd and $F_1(x)|(x^r-1)$, by the uniqueness of the Hensel Lift (see Chapter 5 in \cite{Wan}), we have that $\frac{x^r-1}{F_1(x)}$ is the Hensel Lift of $\frac{x^r-1}{\overline{F}_1(x)}$ over $R$. It means that $(\mathscr{C}^\perp)_r=\left( \frac{x^r-1}{F_1(x)}\right)$ and $|(\mathscr{C}^\perp)_r|=4^{{\rm deg}(F_1(x))}$. Similarly, we have $|(\mathscr{C}^\perp)_s|=4^{{\rm deg}(F_2(x))+\varepsilon}$.
\end{proof}
\begin{lemma}
Let $\mathscr{C}=((F_1(x)|0), (l(x)|F_2(x)))$ be a free double cyclic code of length $(r,s)$ over $R$. Let $\mathscr{C}^\perp=((\widehat{F}_1(x)|0), (\widehat{l}(x),\widehat{F}_2(x)))$, then
\begin{equation*}
{\rm deg}(\widehat{F}_1(x))=r-{\rm deg}({\rm gcd}(F_1(x), l(x)))
\end{equation*}
and
\begin{equation*}
{\rm deg}(\widehat{F}_2(x))=s-{\rm deg}(F_2(x))-{\rm deg}(F_1(x))+{\rm deg}({\rm gcd}(F_1(x), l(x))).
\end{equation*}
\end{lemma}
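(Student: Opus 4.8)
The plan is to read off the degrees of $\widehat{F}_1(x)$ and $\widehat{F}_2(x)$ from the degrees of their reductions modulo $2$, which are already computed in Lemma 5, by showing that in the free case reduction modulo $2$ preserves all the relevant degrees. First I would collect the structural consequences of freeness: since $\mathscr{C}$ is free, $\mathscr{C}^\perp$ is free as well (as recorded in the proof of Lemma 6), so by Proposition 4 we have $\widehat{F}_1(x)\mid(x^r-1)$ and $\widehat{F}_2(x)\mid(x^s-1)$, and both are monic by the standing assumption of Section 4. A monic divisor of $x^r-1$ (or of $x^s-1$) over $R=\mathbb{Z}_4$ has leading coefficient $1$, which remains nonzero after reduction, so ${\rm deg}(\widehat{F}_1(x))={\rm deg}(\overline{\widehat{F}}_1(x))$ and ${\rm deg}(\widehat{F}_2(x))={\rm deg}(\overline{\widehat{F}}_2(x))$; by the same reasoning ${\rm deg}(F_1(x))={\rm deg}(\overline{F}_1(x))$ and ${\rm deg}(F_2(x))={\rm deg}(\overline{F}_2(x))$.

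Next I would treat the only term that is not visibly the reduction of a monic divisor, namely the gcd, and prove ${\rm deg}({\rm gcd}(F_1(x),l(x)))={\rm deg}({\rm gcd}(\overline{F}_1(x),\overline{l}(x)))$. Reduction modulo $2$ is a surjective ring homomorphism $R[x]/(x^r-1)\to\mathbb{F}_2[x]/(x^r-1)$, so it carries $\mathscr{C}_r=(F_1(x),l(x))=({\rm gcd}(F_1(x),l(x)))$ onto $(\overline{F}_1(x),\overline{l}(x))=({\rm gcd}(\overline{F}_1(x),\overline{l}(x)))$. On the other hand, the proof of Lemma 6 shows that ${\rm gcd}(F_1(x),l(x))\mid(x^r-1)$, hence this gcd is itself a monic divisor of $x^r-1$, its reduction has the same degree, and it generates $\overline{\mathscr{C}_r}$. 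Comparing the two monic generators of the single cyclic code $\overline{\mathscr{C}_r}$ over the field $\mathbb{F}_2$ forces $\overline{{\rm gcd}(F_1(x),l(x))}={\rm gcd}(\overline{F}_1(x),\overline{l}(x))$, and the desired degree equality follows.

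With these preliminaries the conclusion is immediate. Substituting ${\rm deg}(\widehat{F}_1(x))={\rm deg}(\overline{\widehat{F}}_1(x))$ and ${\rm deg}({\rm gcd}(\overline{F}_1(x),\overline{l}(x)))={\rm deg}({\rm gcd}(F_1(x),l(x)))$ into the first identity of Lemma 5 gives ${\rm deg}(\widehat{F}_1(x))=r-{\rm deg}({\rm gcd}(F_1(x),l(x)))$; substituting the analogous equalities together with ${\rm deg}(\overline{F}_i(x))={\rm deg}(F_i(x))$ into the second identity of Lemma 5 gives ${\rm deg}(\widehat{F}_2(x))=s-{\rm deg}(F_2(x))-{\rm deg}(F_1(x))+{\rm deg}({\rm gcd}(F_1(x),l(x)))$.

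I expect the main obstacle to be the careful justification of these degree-preservation claims rather than any computation, since everything rests on the fact that in the free case each generator in sight --- including the gcd --- is a monic divisor of $x^r-1$ or $x^s-1$, whose reduction modulo $2$ keeps both its degree and its ideal-generation properties; this is the same Hensel-lift bookkeeping already used in Lemma 6. Once it is in place, Lemma 5 supplies the rest and no further calculation is required.
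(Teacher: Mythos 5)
Your proof is correct, and for the first identity it is essentially the paper's own argument: the paper likewise passes to the residue code, observing that $\frac{x^r-1}{\gcd(F_1(x),l(x))}$ is the Hensel lift of $\overline{\widehat{F}}_1(x)=\frac{x^r-1}{\gcd(\overline{F}_1(x),\overline{l}(x))}$ and that $\gcd(F_1(x),l(x))$, being a monic divisor of $x^r-1$, is the Hensel lift of $\gcd(\overline{F}_1(x),\overline{l}(x))$; your ideal-image argument showing $\overline{\gcd(F_1(x),l(x))}=\gcd(\overline{F}_1(x),\overline{l}(x))$ is just an elementary substitute for that lift-uniqueness citation. Where you genuinely diverge is the second identity. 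You transfer the second degree formula of Lemma 5 across reduction modulo $2$, which forces you to argue that $\widehat{F}_2(x)$ is (up to a unit) a monic divisor of $x^s-1$, coming from freeness of $\mathscr{C}^\perp$. The paper instead never compares $\deg(\widehat{F}_2(x))$ with $\deg(\overline{\widehat{F}}_2(x))$: it counts. Since $(\mathscr{C}^\perp)_s$ is a free cyclic code generated by $\widehat{F}_2(x)$, its size is $4^{s-\deg(\widehat{F}_2(x))}$, while Lemma 6 gives $|(\mathscr{C}^\perp)_s|=4^{\deg(F_2(x))+\varepsilon}$ with $\varepsilon=\deg(F_1(x))-\deg(\gcd(F_1(x),l(x)))$, and equating exponents yields the formula. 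The counting route buys independence from the canonical form of the dual's generators (no monicity of $\widehat{F}_2(x)$ is needed), whereas your route is more uniform, handling both identities by one mechanism and bypassing Lemma 6 entirely. One small point to tighten: the Section 4 monicity assumption is literally stated only for $F_1(x)$ and $F_2(x)$, so rather than invoking it for $\widehat{F}_1(x)$ and $\widehat{F}_2(x)$ you should deduce that their leading coefficients are units from the divisibility $\widehat{F}_1(x)\mid(x^r-1)$ and $\widehat{F}_2(x)\mid(x^s-1)$ that freeness of $\mathscr{C}^\perp$ provides; a unit leading coefficient is all that degree preservation under reduction requires.
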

\begin{proof}
Clearly, $(\mathscr{C}_r)^\perp$ is a cyclic code generated by $\frac{x^r-1}{{\rm gcd}(F_1(x), l(x))}$, which is the Hensel Lift of $\widehat{\overline{F}}_1(x)=\frac{x^r-1}{{\rm gcd}(\overline{F}_1(x), \overline{l}(x))}$. Thus ${\rm deg}(\widehat{F}_1(x))={\rm deg}(\widehat{\overline{F}}_1(x))$. Since ${\rm gcd}(F_1(x), l(x))|(x^r-1)$, it follows that ${\rm gcd}(F_1(x), l(x))$ is the Hensel Lift of ${\rm gcd}(\overline{F}_1(x), \overline{l}(x))$, i.e. ${\rm deg}({\rm gcd}(F_1(x), l(x)))={\rm deg}({\rm gcd}(\overline{F}_1(x), \overline{l}(x)))$. Since $\mathscr{C}^\perp$ is a free $R$-submodule of $R_{r,s}$, it follows that $(\mathscr{C}^\perp)_s$ is a free cyclic code generated by $\widehat{F}_2(x)$, i.e. $|(\mathscr{C}^\perp)_s|=4^{s-{\rm deg}(\widehat{F}_2(x))}$. Moreover, by Lemma 6, $|(\mathscr{C}^\perp)_s|=4^{{\rm deg}(F_2(x))+\varepsilon}$. Therefore, we have that ${\rm deg}(\widehat{F}_2(x))=s-{\rm deg}(F_2(x))-{\rm deg}(F_1(x))+{\rm deg}({\rm gcd}(F_1(x), l(x)))$.\end{proof}

From Propositions 5, 6, 7 and Lemma 7, we have the following result.
\begin{corollary}
Let $\mathscr{C}=((F_1(x)|0), (l(x)|F_2(x)))$ be a free double cyclic code of length $(r,s)$ over $R$. Let $\mathscr{C}^\perp=((\widehat{F}_1(x)|0), (\widehat{l}(x)|\widehat{F}_2(x)))$ be its dual code, then $\widehat{F}^*_1(x)=\frac{x^r-1}{{\rm gcd}(F_1(x), l(x))}$ and $\widehat{F}^*_2(x)=\frac{(x^s-1){\rm gcd}(F_2(x), l(x))}{F_1(x)F_2(x)}$. Further,  let $A(x)=\frac{l(x)}{{\rm gcd}(F_1(x), l(x))}$, then
\begin{equation*}
\nu(x)=x^{k-{deg}(F_2(x))+{\rm deg}(l(x))}(A^*(x))^{-1}~{\rm mod}~ \left(\frac{F^*_1(x)}{{\rm gcd}^*(F_1(x),l(x))}\right).
\end{equation*}
\end{corollary}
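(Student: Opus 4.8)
The plan is to treat Corollary 5 as the free-code analogue of Corollary 4, deriving each of the three formulas by pairing the divisibility relation supplied by the relevant proposition with the exact degree count from Lemma 7, and then pinning down the as-yet-undetermined cofactor. Throughout I use that $\mathscr{C}$ free means $F_1(x)\mid(x^r-1)$ and $F_2(x)\mid(x^s-1)$, so that the generators $\widehat{F}_1(x),\widehat{F}_2(x)$ of the (also free) dual are monic divisors of $x^r-1$ and $x^s-1$ respectively, as established in the proof of Lemma 6. In particular each has unit constant term (the constant term of $x^r-1$ is the unit $-1$), so that $\deg\widehat{F}_i^*(x)=\deg\widehat{F}_i(x)$; this is what lets Lemma 7 control the degree of the reciprocal polynomials.

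First I would handle $\widehat{F}_1^*(x)$. Since ${\rm gcd}(F_1(x),l(x))\mid F_1(x)\mid(x^r-1)$, the quotient $\frac{x^r-1}{{\rm gcd}(F_1(x),l(x))}$ is a genuine polynomial, and Proposition 5 rewrites as $\widehat{F}_1^*(x)=\lambda(x)\frac{x^r-1}{{\rm gcd}(F_1(x),l(x))}$. Comparing degrees via Lemma 7, one has $\deg\widehat{F}_1(x)=r-\deg{\rm gcd}(F_1(x),l(x))=\deg\frac{x^r-1}{{\rm gcd}(F_1(x),l(x))}$, which forces $\deg\lambda(x)=0$, i.e. $\lambda(x)$ is a constant in $R$. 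Reducing modulo $2$ and invoking Corollary 4, where the binary cofactor was already shown to be $1$, gives $\overline{\lambda}=1$, so $\lambda(x)\in\{1,3\}$; matching the monic normalization of the generators then yields $\lambda(x)=1$ and the first formula. The computation for $\widehat{F}_2^*(x)$ runs identically: Proposition 6 gives $\widehat{F}_2^*(x)=\mu(x)\frac{(x^s-1){\rm gcd}(F_2(x),l(x))}{F_1(x)F_2(x)}$, the degree formula for $\widehat{F}_2(x)$ in Lemma 7 forces $\mu(x)$ to be a constant unit, and the mod-$2$ reduction against Corollary 4 together with monic normalization gives $\mu(x)=1$.

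For the last formula I would follow the route of Corollary 4.19 in \cite{Borges2}, lifted from $\mathbb{Z}_2$ to $R$. Proposition 7 gives $\widehat{l}^*(x)F_1(x)=\nu(x)(x^r-1)$; writing $A(x)=\frac{l(x)}{{\rm gcd}(F_1(x),l(x))}$ and cancelling the common factor ${\rm gcd}(F_1(x),l(x))$, this becomes a congruence for $\nu(x)$ modulo $\frac{F_1^*(x)}{{\rm gcd}^*(F_1(x),l(x))}$, which is precisely the congruence recorded (in its reduced form) in Lemma 5. The crux is that $A^*(x)$ is invertible modulo this modulus: after passing to the residue field $\mathbb{Z}_2$, the polynomial $A(x)$ and $F_1(x)/{\rm gcd}(F_1(x),l(x))$ are coprime, and because $r$ is odd and the relevant factors divide $x^r-1$, their coprimality lifts to $R$ by uniqueness of the Hensel lift (Chapter 5 in \cite{Wan}, as already used in Lemmas 6 and 7). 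Hence $(A^*(x))^{-1}$ exists modulo $\frac{F_1^*(x)}{{\rm gcd}^*(F_1(x),l(x))}$, and solving the congruence explicitly yields $\nu(x)=x^{k-\deg F_2(x)+\deg l(x)}(A^*(x))^{-1}$ modulo that modulus.

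The main obstacle is this last step. The degree-matching arguments for $\lambda$ and $\mu$ are routine once Lemma 7 is available, but establishing invertibility of $A^*(x)$ modulo $\frac{F_1^*(x)}{{\rm gcd}^*(F_1(x),l(x))}$ over $\mathbb{Z}_4$, a ring that is not a field, and then solving the resulting congruence for $\nu(x)$ requires carefully transporting both the coprimality and the explicit inversion from the binary case of Corollary 4 through the Hensel correspondence. This is exactly why the analogous computation is only sketched in \cite{Borges2}. I would either reproduce that lifting argument in full, tracking the chain-ring structure of $R$ at each division, or, as the paper does, reduce the identity to the already-established binary statement of Corollary 4 together with the uniqueness of the Hensel lift guaranteed by the freeness hypothesis.
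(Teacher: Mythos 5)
Your proposal is correct and takes essentially the same route as the paper: the paper's entire proof of this corollary is ``similar to that of Corollary 4,'' i.e., pair the divisibility relations of Propositions 5--7 with the degree formulas of Lemma 7 to force the cofactors $\lambda(x)$ and $\mu(x)$ to be trivial, and obtain the congruence for $\nu(x)$ by the argument of Corollary 4.19 in \cite{Borges2} transported through the Hensel lift, which is exactly your plan. If anything, your explicit resolution of the unit ambiguity over $\mathbb{Z}_4$ (degree matching only gives $\lambda(x)$ constant, and the mod-$2$ reduction to Corollary 4 plus monic normalization is needed to exclude $\lambda(x)=3$) is more careful than the paper, which leaves this step implicit.
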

\begin{proof}
The proof process is similar to that of Corollary 4.
\end{proof}
Finally, we summarize the results on dual codes in the following result.
\begin{proposition}
Let $\mathscr{C}=((F_1(x)|0), (l(x)|F_2(x)))$ be a double cyclic code of length $(r,s)$ over $R$, where $F_1(x)$ and $F_2(x)$ be monic polynomials over $R$. Let $\mathscr{C}^\perp=((\widehat{F}_1(x)|0), (\widehat{l}(x)|\widehat{F}_2(x)))$ be its dual code, then we have\\
{\rm (i)}~Let $A(x)=\frac{\overline{l}(x)}{{\rm gcd}(\overline{F}_1(x), \overline{l}(x))}$, then $\overline{\widehat{F_1}^*}(x)=\frac{x^r-1}{{\rm gcd}(\overline{F}_1(x), \overline{l}(x))}$ and $\overline{\widehat{F_2}^*}(x)=\frac{(x^s-1){\rm gcd}(\overline{F}_2(x), \overline{l}(x))}{\overline{F}_1(x)\overline{F}_2(x)}$. Further, $\widehat{l}^*(x)F_1(x)=\nu(x)(x^r-1)$ with
\begin{equation*}
\overline{\nu}(x)=x^{k-{deg}(\overline{F}_2(x))+{\rm deg}(\overline{l}(x))}(A^*(x))^{-1}~{\rm mod}~ \left(\frac{\overline{F_1}^*(x)}{{\rm gcd}^*(\overline{F}_1(x),\overline{l}(x))}\right).
\end{equation*}
{\rm (ii)}~Let $A(x)=\frac{l(x)}{{\rm gcd}(F_1(x), l(x))}$. If $\mathscr{C}$ is a free $R$-submodule of $R_{r,s}$, then $\widehat{F}^*_1(x)=\frac{x^r-1}{{\rm gcd}(F_1(x), l(x))}$ and $\widehat{F}^*_2(x)=\frac{(x^s-1){\rm gcd}(F_2(x), l(x))}{F_1(x)F_2(x)}$. Further, $\widehat{l}^*(x)F_1(x)=\nu(x)(x^r-1)$ with
\begin{equation*}
\nu(x)=x^{k-{deg}(F_2(x))+{\rm deg}(l(x))}(A^*(x))^{-1}~{\rm mod}~ \left(\frac{F^*_1(x)}{{\rm gcd}^*(F_1(x),l(x))}\right).
\end{equation*}
\end{proposition}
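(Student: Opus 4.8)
The plan is to recognize that this final proposition is nothing more than a consolidation of the two cases already settled: part (i) is precisely the content of Corollary 4 and part (ii) is precisely the content of Corollary 5. The cleanest route is therefore to assemble those two corollaries, noting that part (i) is stated for an arbitrary double cyclic code $\mathscr{C}$ (so the identities are recorded in the residue code modulo $2$), whereas part (ii) is the sharper, honestly characteristic-free statement that becomes available once $\mathscr{C}$ is free. I would nonetheless indicate how each identity is produced from scratch, so that the statement can be read independently.

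First I would recall the three ``up to a multiplier'' relations supplied by Propositions 5, 6 and 7, namely $\widehat{F}_1^*(x){\rm gcd}(F_1(x),l(x))=\lambda(x)(x^r-1)$, then $\widehat{F}_2^*(x)F_1(x)F_2(x)=\mu(x)(x^s-1){\rm gcd}(F_2(x),l(x))$, and $\widehat{l}^*(x)F_1(x)=\nu(x)(x^r-1)$, for suitable $\lambda,\mu,\nu\in R[x]$. For part (i) I would reduce the first two identities modulo $2$, obtaining $\overline{\widehat{F_1}^*}(x)=\overline{\lambda}(x)\frac{x^r-1}{{\rm gcd}(\overline{F}_1(x),\overline{l}(x))}$ together with the analogous expression for $\overline{\widehat{F_2}^*}(x)$. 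The degree formulas of Lemma 5 then pin down ${\rm deg}(\overline{\widehat{F}}_1)$ and ${\rm deg}(\overline{\widehat{F}}_2)$ exactly, and matching degrees forces $\overline{\lambda}(x)=\overline{\mu}(x)=1$, which yields the two displayed reciprocal-polynomial formulas of (i).

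For part (ii) the argument is the same in spirit but carried out over $R$ rather than modulo $2$. Freeness lets me invoke Lemma 7 for the exact degrees of $\widehat{F}_1$ and $\widehat{F}_2$, and the uniqueness of the Hensel lift (as already used in the proof of Lemma 6) upgrades the mod-$2$ conclusions to genuine equalities $\lambda(x)=\mu(x)=1$ over $R$; this turns the congruences into the identities $\widehat{F}_1^*(x)=\frac{x^r-1}{{\rm gcd}(F_1(x),l(x))}$ and $\widehat{F}_2^*(x)=\frac{(x^s-1){\rm gcd}(F_2(x),l(x))}{F_1(x)F_2(x)}$.

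The step I expect to be the genuine obstacle is the formula for $\nu(x)$ (respectively $\overline{\nu}(x)$), which both corollaries deferred to the analogue of Corollary 4.19 in \cite{Borges2}. Starting from $\widehat{l}^*(x)F_1(x)=\nu(x)(x^r-1)$ of Proposition 7 and the congruence recorded in Lemma 5, I would solve for $\nu$ modulo $\frac{F_1^*(x)}{{\rm gcd}^*(F_1(x),l(x))}$ by inverting $A^*(x)$ there, where $A(x)=l(x)/{\rm gcd}(F_1(x),l(x))$. The crux is the invertibility of $A^*(x)$: since $A(x)$ and $F_1(x)/{\rm gcd}(F_1(x),l(x))$ are coprime by construction, and since every divisor of $x^r-1$ has a unit constant term (so $x$ is a unit modulo it and reciprocation preserves both degree and coprimality), $A^*(x)$ is indeed invertible modulo $\frac{F_1^*(x)}{{\rm gcd}^*(F_1(x),l(x))}$. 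This delivers $\nu(x)=x^{k-{\rm deg}(F_2(x))+{\rm deg}(l(x))}(A^*(x))^{-1}$ modulo that polynomial, and the residue version is verbatim the same with bars inserted. Apart from this inversion, the whole proof reduces to degree bookkeeping and to the identities already proved, so no additional machinery is needed.
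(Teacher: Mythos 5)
Your proposal matches the paper's treatment exactly: the paper states this proposition with no separate proof, offering it purely as a summary of Corollaries 4 and 5, which rest on Propositions 5--7 together with the degree and congruence bookkeeping of Lemmas 5 and 7 --- precisely the chain you reconstruct. The extra details you supply (degree matching forcing $\overline{\lambda}(x)=\overline{\mu}(x)=1$, Hensel-lift uniqueness in the free case, and the invertibility of $A^*(x)$ modulo $F_1^*(x)/{\gcd}^*(F_1(x),l(x))$ via coprimality) are consistent with, and in places more explicit than, what the paper itself records (which defers the $\nu(x)$ formula to Corollary 4.19 of the $\mathbb{Z}_2$-double cyclic codes reference).
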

\begin{example}
Let $\mathscr{C}=((x^2+x+1|0), (x+1|x^6+x^3+1))$ be a free double cyclic code of length $(3,9)$ over $R$, then we have $F_1(x)=x^2+x+1$, $l(x)=x+1$ and $F_2(x)=x^6+x^3+1$. From Proposition $4$, we have that $\mathscr{C}$ has the following sets as its generating sets
\begin{equation*}
S_1=\{(x^2+x+1|0)\},~S_3=\bigcup_{i=0}^2\{x^i*(x+1|x^6+x^3+1)\}.
\end{equation*}
Moreover, $|\mathscr{C}|=4^4$. From Proposition 8, we have $\widehat{F}^*_1(x)=\frac{x^3-1}{{\rm gcd}(x^2+x+1,x+1)}=0$ in $R[x]/(x^3-1)$, and $\widehat{F}^*_2(x)=\frac{x^9-1}{(x^6+x^3+1)(x^2+x+1)}=x+3$ in $R[x]/(x^9-1)$. Since $A(x)=x+1$, it follows that $A^*(x)=x+1$ and $(A^*(x))^{-1}=3x~{\rm mod}~x^2+x+1$. Therefore, $\nu(x)=3x^{9-6+1}x=3x^5=x+1~{\rm mod}~(x^2+x+1)$, which implies that $\widehat{l}^*(x)=\frac{(x^3-1)(x+1)}{x^2+x+1}=x^2-1$. Thus, $\widehat{l}(x)=3x^2+1$. It means that $\mathscr{C}^\perp=(3x^2+1|3x+1)=(x^2-1|x-1)$. The generating set of $\mathscr{C}^\perp$ is $S_3=\bigcup_{i=0}^7\{ x^i*(x^2-1|x-1)\}$ and $|\mathscr{C}^\perp|=4^8$.
\end{example}
\section{Conclusion}
This paper is devoted to the study of double cyclic codes over $\mathbb{Z}_4$. We first determine the generator polynomials of this family of codes, and give their minimal generating sets.  Further, we also discuss the relationship of generators between the double cyclic code and its dual. Examples are given to show that some optimal or suboptimal nonlinear binary codes can be obtained from this family of codes. We believe that some more optimal or new nonlinear binary codes can be obtained from double cyclic codes over $\mathbb{Z}_4$, and it will be an interesting and challenging work in future.

\end{document}